\def\BState{\State\hskip-\ALG@thistlm}
\algrenewcommand\algorithmicindent{0.8em}%
\newtheorem{lemma}{Lemma}
\newtheorem{theorem}{Theorem}
\newcounter{linectr}
\begin{document}
%
\title{Coflow Scheduling in Data Centers: Routing and Bandwidth Allocation}

\author{Li~Shi,
		Junwei~Zhang,
        Yang~Liu,
        and~Thomas~Robertazzi,~\IEEEmembership{Fellow,~IEEE}
\IEEEcompsocitemizethanks{
\IEEEcompsocthanksitem Li Shi is with Snap Inc. The work was performed when he was with Stony Brook University.
E-mail: lishi.pub@gmail.com.
\IEEEcompsocthanksitem Yang Liu is with Uber Inc.
E-mail: yangliu89415@gmail.com.
\IEEEcompsocthanksitem Junwei Zhang is with Uber Inc.
E-mail:junweizhang23@gmail.com.
\IEEEcompsocthanksitem Thomas Robertazzi is with the Department
of Electrical and Computer Engineering, Stony Brook University, Stony Brook, NY, 11794.
E-mail: thomas.robertazzi@stonybrook.edu.
}
\thanks{This work has been submitted to the IEEE for possible publication. Copyright may be transferred without notice, after which this version may no longer be accessible. }}



\IEEEtitleabstractindextext{%
\begin{abstract}
In distributed computing frameworks like MapReduce, Spark, and Dyrad, a coflow is a set of flows transferring data between two stages of a
job.
The job cannot start its next stage unless all flows in the coflow finish.
To improve the execution performance of such a job, it is crucial to reduce the completion time of a coflow which can contribute more than 50\% of
the job completion time.
While several schedulers have been proposed, we observe that routing, as a factor greatly impacting the Coflow Completion Time (CCT), has
not been well considered.

In this paper, we focus on the coflow scheduling problem and jointly consider routing and bandwidth allocation. 
We first provide an analytical solution to the problem of optimal bandwidth allocation with pre-determined routes.
We then formulate the coflow scheduling problem as a Mixed Integer Non-linear Programming problem and present its relaxed convex optimization problem.
We further propose two algorithms, CoRBA and its simplified version: CoRBA-fast, that jointly perform routing and bandwidth allocation for a given
coflow while minimizes the CCT.
Through both offline and online simulations, we demonstrate that CoRBA reduces the CCT by 40\%-500\% compared to the state-of-the-art algorithms. 
Simulation results also show that CoRBA-fast can be tens of times faster than all other algorithms with around 10\% performance degradation compared
to CoRBA, which makes the use of CoRBA-fast very applicable in practice. 
\end{abstract}

}

\maketitle

\IEEEdisplaynontitleabstractindextext

%
\IEEEpeerreviewmaketitle

\section{Introduction}
\label{sec:intro}
In recent years, we have witnessed a significant improvement on the IT infrastructures, like high performance computing systems, ultra high-speed
networks and large-scale storage systems. 
Benefiting from these improvements, our ability of collecting, storing and processing data has also been dramatically enhanced. 
In a data center owned by big corporations like Google or Twitter, every day, hundreds terabytes of data can be transferred into its data storage
system~\cite{bigdata:lin2013scaling,network:singh2015jupiter} 
and processed/analyzed by some computing frameworks such as MapReduce~\cite{mapreduce} and Spark~\cite{spark}.
By applying such data analysis on many different areas like physics, biology, medicine, manufacture and finance, we have greatly changed the world we
live in. 
In such a context, one of the most important goals pursued by engineers and researchers is improving the execution performance of those
data processing jobs.
 
To achieve this goal, a critical problem to solve is how to optimize data transferring time.
In many computing frameworks, jobs consist of a sequence of processing stages.
Between two consecutive stages, there is usually a set of flows which move output data of the previous stage to the nodes executing
the later stage.
A job cannot start its next stage until all flows in this set finish. 
We usually refer such a set of flows as a {\em coflow}~\cite{coflow:2012coflow}.
Since the transfer of a coflow can occupy more than 50\% of the job completion time~\cite{coflow:2011orchestra}, optimizing the Coflow Completion Time
(CCT) is important for improving the execution performance of jobs.

\begin{figure}[!t]
\begin{center}
\includegraphics[width=3in]{./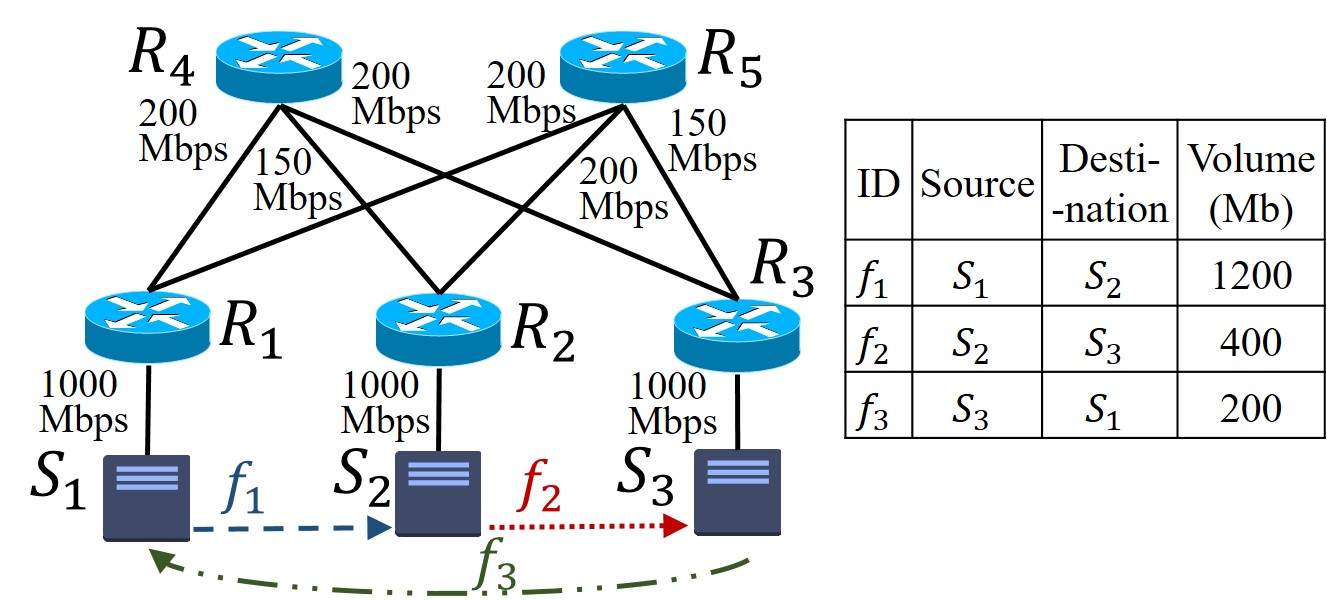}
\vspace{-2mm}
\caption{Scheduling 3 flows in a network with 3 servers and 5 routers.}
\label{fig:example}
\vspace{-2mm}
\end{center}
\end{figure}

Recently many mechanisms~\cite{popa2012faircloud,elasticswitch,chattycloud} have been proposed to provide
bandwidth guarantee to network flows. 
With such ability, we can guarantee the flow completion time (FCT), i.e., the completion time of a single flow. 
However, scheduling a coflow in a fashion that minimizes its completion time is
still a complex problem which involves both routing and bandwidth allocation at the level of the whole set of flows. 

To illustrate this problem, consider the scenario shown in Fig.~\ref{fig:example}, in which a coflow with 3 individual flows
$(f_1, f_2, f_3)$ is waiting to be scheduled in a network with 3 servers and 5 routers.
The goal is minimizing the CCT. 
To schedule this coflow, there is a need to determine a route for each flow and allocate a certain amount of bandwidth along each route. 
Fig.~\ref{fig:example-sol} shows three schedules generated by different scheduling strategies.
Fig.~\ref{fig:example1-sol1} shows a schedule in which each flow is routed via the maximum capacity path and bandwidth is fairly allocated to flows
using the same link, 
while Fig.~\ref{fig:example1-sol2} shows a schedule using the same routes but allocating bandwidth based on flow volume.
We can see that by appropriately allocating bandwidth, the completion time of the largest flow $f_1$ is successfully reduced, which leads a decreases
on the CCT.
However, the schedule shown in Fig.~\ref{fig:example1-sol2} is not the optimal one: The determined routes share the same link, which causes
bandwidth competition and limits the CCT.
This is because routing is performed at the level of individual flows rather than the whole set of flows in this schedule.
Whereas, Fig.~\ref{fig:example1-sol3} shows the optimal schedule in which we co-schedule all flows, reduce route overlap and allocate more bandwidth.

\begin{figure}[!t]
\begin{center}
\subfigure[Schedule 1: Maximum capacity path + Fair sharing]{
\includegraphics[width=3in]{./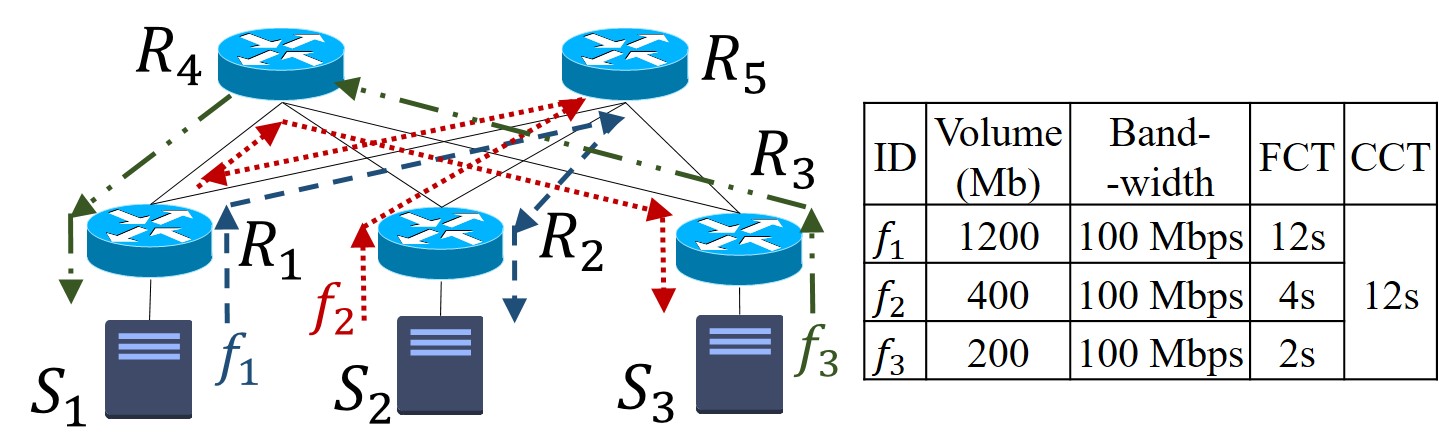}
\label{fig:example1-sol1}
}
\subfigure[Schedule 2: Maximum capacity path + Volume-proportional sharing]{
\includegraphics[width=3in]{./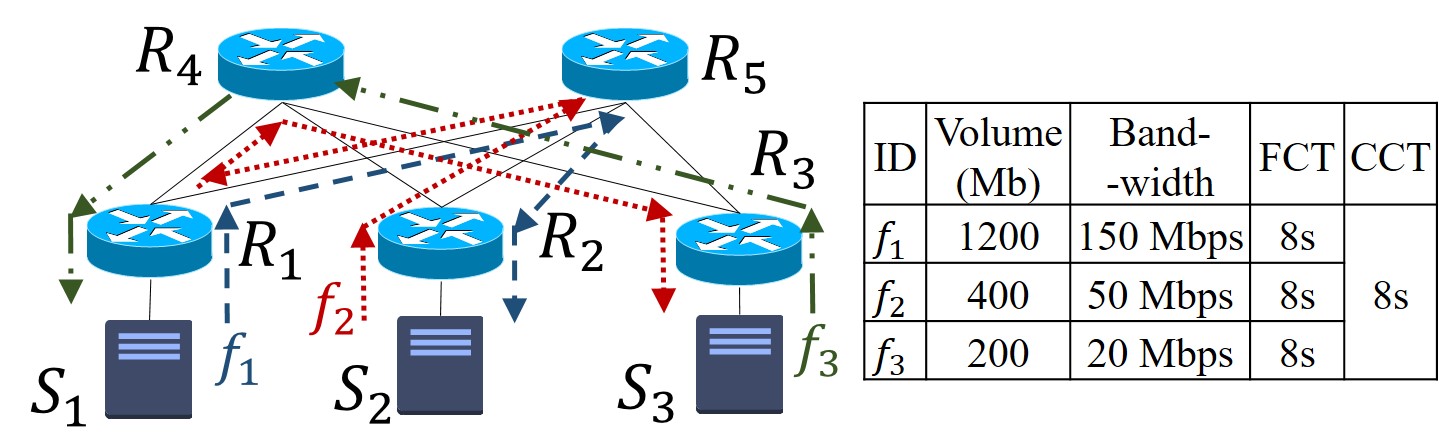}
\label{fig:example1-sol2}
}
\subfigure[Schedule 3: Jointly consider routing and bandwidth allocation]{
\includegraphics[width=3in]{./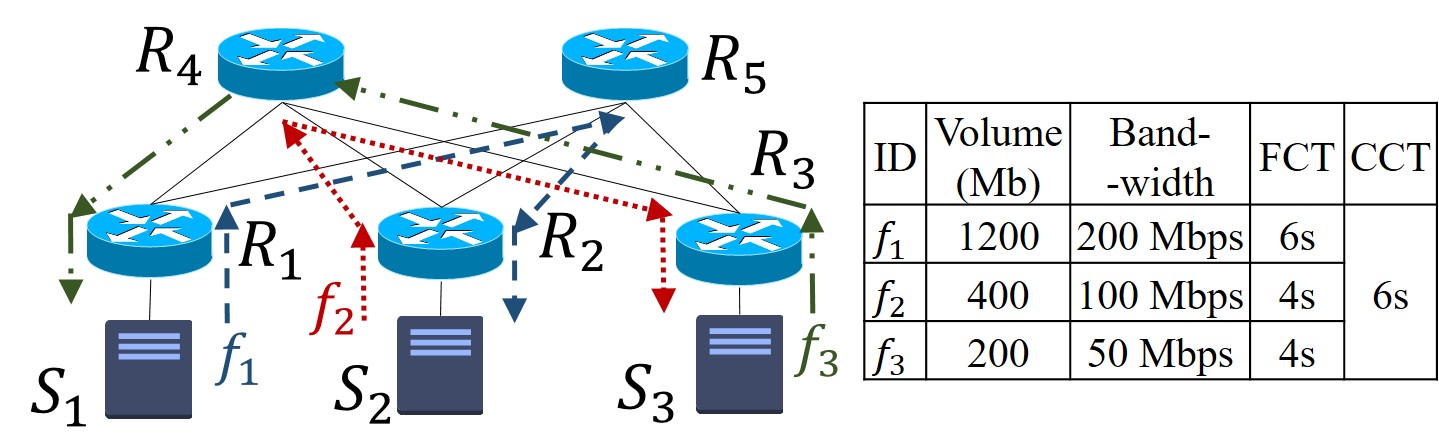}
\label{fig:example1-sol3}
}
\vspace{-2mm}
\caption{Three flow schedules generated by different strategies.}
\label{fig:example-sol}
\vspace{-4mm}
\end{center}
\end{figure}

From this example, we can see that routing and bandwidth allocation together determine the CCT. 
We can obtain the minimum CCT only if we find out the optimal solution on both routing and bandwidth allocation.
However, in practice, we may need to schedule a large number of flows in a network with thousands or tens of thousands of nodes. 
In such cases, there exists a vast amount of possible routing plans and for each routing plan there are very many ways to allocate bandwidth. 
Searching the optimal solution in such a huge solution space is not an easy problem to solve.

While several approaches have been proposed to schedule
coflows~\cite{coflow:2011orchestra,coflow:2012coflow,coflow:2014barrat,coflow:2014varys,coflow:2015aalo,coflow:2015rapier},
they either do not jointly consider routing and bandwidth
allocation~\cite{coflow:2011orchestra,coflow:2012coflow,coflow:2014barrat,coflow:2014varys,coflow:2015aalo} or perform routing based on a limited set
of candidate paths~\cite{coflow:2015rapier}.
In this paper, we focus on the coflow scheduling problem in which we jointly consider routing and bandwidth allocation for a given coflow.
In summary, our main contributions include
\begin{itemize}[align=left,leftmargin=*]
  \item We study the problem of optimal bandwidth allocation with pre-determined routes, 
  formulate it as a convex optimization problem, and provide an analytical solution. 
  By solving this problem, we essentially reduce the dimension of the coflow scheduling problem:
  For any routing plan, we can always optimally allocate bandwidth. (Section~\ref{sec:opt-ba}) 
  \item We formulate the Coflow Scheduling (CoS) problem as a Mixed Integer Non-linear Programming (NLMIP) problem incorporating both
  routing constraints and bandwidth allocation constraints.
  We then present a relaxation and an equivalent solvable convex optimization problem. (Section~\ref{sec:cos})
  \item We propose an algorithm, called Coflow Routing and Bandwidth Allocation (CoRBA), that solves the CoS problem. 
  With more practical consideration, we further propose CoRBA-fast, a simplified version of CoRBA. (Section~\ref{sec:corba})
  \item We evaluate the performance of CoRBA and CoRBA-fast by comparing them with the state-of-the-art algorithms through both offline and online
  simulations.
  The simulation results show that CoRBA can achieve 40\%-500\% smaller CCT than the existing algorithms. 
  The results also show that CoRBA-fast can be hundreds times faster than CoRBA with up to 10\% performance degradation, which makes it a good choice
  in practical use. (Section~\ref{sec:sim})
  \item We further discuss about the applicability of CoRBA-fast by comparing its running time with the transferring time of coflows in some typical
  MapReduce jobs.
  (Section~\ref{sec:discuss})
  
\end{itemize}


\section{Related Work}
\label{sec:related}
A significant amount of research has focused on the area of flow transmissions in data center networks. 
In this section, we discuss some of the works most relevant to our problems.

\smallskip \noindent {\bf Coflow schedulers.}
The problem of coflow-aware flow scheduling has attracted significant
attention~\cite{coflow:2012coflow,coflow:2011orchestra,coflow:2014barrat,coflow:2014varys,coflow:2015aalo,coflow:2015rapier}.
Orchestra~\cite{coflow:2011orchestra} is a global control architecture that manages the transfer of a set of correlated flows. 
Chowdhury {\em et al.}~\cite{coflow:2012coflow} propose the concept of coflow, a network abstraction that describes the traffic pattern of
prevalent data flows.
Barrat~\cite{coflow:2014barrat} is a decentralized flow scheduler which groups flows of a task and schedules them together with scheduling policies
like FIFO-LM.
Varys~\cite{coflow:2014varys} is a coflow scheduler that addresses the inter-coflow scheduling problem, supports deadlines and guarantees coflow
completion time. 
Aalo~\cite{coflow:2015aalo} is a recently proposed coflow scheduler that schedules coflows without any prior knowledge and supports pipelining and
dependencies in multi-stage DAGs.
While the coflow schedulers introduced above focus on scheduling individual coflows or groups of coflows, however, they neglect routing, an important
factor that impacts the coflow completion time. 

RAPIER~\cite{coflow:2015rapier} is a recently proposed coflow-aware network scheduling framework that integrates both routing and bandwidth
allocation. 
In RAPIER, scheduling a single coflow is formulated as a linear programming problem in which the route of each flow in the coflow is selected from a
set of candidate paths given as input. 
In contrast to RAPIER, in this paper, we propose algorithms that consider the bandwidth availability of the whole network and route flows via the best
paths instead of picking up a path from a set of candidates. 
We show the superior performance of our algorithms by comparing them with the optimization algorithm used in RAPIER.

\smallskip \noindent {\bf Flow schedulers.}
Much research work has also been performed on reducing the average flow completion
time~\cite{flowte:rojas2015schemes,flow:2011d3,flow:2012pdq,flow:2013pfabric,flow:2014repflow,flow:2015pias}.
Rojas {\em et al.}~\cite{flowte:rojas2015schemes} give a comprehensive survey on existing schemes for scheduling flows in data center
networks.
PDQ~\cite{flow:2012pdq} is a flow scheduling protocol which utilizes explicit rate control to allocate bandwidth to flows and enables flow preemption. 
pFabric~\cite{flow:2013pfabric} is a datacenter transport design that decouples flow scheduling from rate control, 
in which flows are prioritized and switches implement a very simple priority-based scheduling/dropping mechanism. 
RepFlow~\cite{flow:2014repflow} is a transport design that replicates each short flow. 
It transmits the replicated and original flows via different paths, which reduces the probability of experiencing long
queueing delay and therefore decreases the flow completion time.
PIAS~\cite{flow:2015pias} is an information-agnostic flow scheduling scheme minimizing the FCT by mimicking the Shortest Job First strategy
without any prior knowledge of incoming flows. 
While these existing schemes can reduce the FCT by using different strategies, they are not coflow-aware and therefore have different optimization
objective compared to our algorithms.


\section{Problem Definition}
\label{sec:definition}
In this paper, we consider scheduling a given coflow on a given data center network. We now introduce the input, output, and objective of this
problem.

\smallskip \noindent {\bf Input.} 
The input contains a data center network and a coflow. 
The network is modeled as a graph $\mathcal G = <\mathcal V,\mathcal E,\mathcal
B>$, where $\mathcal V$ is the set of nodes, in which each server and router corresponds to one node; 
$\mathcal E$ is the set of links, in which $E_{uv}$ presents the link between node $u$ and node $v$;
and $\mathcal B$ is the set of available bandwidth on links, in which $B_{uv}$ presents the available bandwidth of $E_{uv}$.
Note that each link in $\mathcal E$ is unique, i.e., the link between $u$ and $v$ is modeled as either
$E_{uv}$ or $E_{vu}$.

The coflow is a set of flows. We denote it by $\mathcal{CF} = \{ F_1, F_2, \ldots, F_N\}$ and define each flow $F_i$ as $\{ S_i, D_i, V_i \}$,
where $S_i$/$D_i$ is the source/destination node of this flow and $V_i$ is the data volume, 
i.e., the total amount of data to be transferred.
Like prior works~\cite{coflow:2014barrat,coflow:2015aalo,coflow:2014varys,coflow:2015rapier},
we assume that the information of a coflow can be captured by upper layer applications~\cite{coflow:2012coflow} or using existing prediction
techniques~\cite{network:2014hadoopwatch}.

\smallskip \noindent {\bf Output.}
The output contains a set of routes, one for each flow in $\mathcal{CF}$, and a certain amount of bandwidth allocated to each route.
We define the set of routes as $\mathcal{P}= \{ p_1, p_2,\ldots,p_N \}$ in which $p_i$ is the route selected for flow $F_i$.
We also define $b_i$ as the amount of bandwidth allocated to the route $p_i$.

\smallskip \noindent {\bf Objective.}
Our objective is minimizing the CCT that is the completion time of the last finished flow.
Let $ct_i$ denote the completion time of flow $F_i$, then our objective is
\begin{equation}    
\label{eqn:obj}
Minimize \ \ \max_{i=1,\ldots,N}\left\{ ct_i \ \big|\ ct_i = \frac{V_i}{b_i} \right\}. 
\end{equation}

\section{Optimal Bandwidth Allocation with Pre-determined Routes}
\label{sec:opt-ba}
We start from the problem of Optimal Bandwidth Allocation (OptBA) with pre-determined routes, 
in which the route of each flow has already been determined and we need to allocate bandwidth to these routes while minimizing the CCT. 
With the solution of this problem, we can optimally allocate bandwidth corresponding to any routing plan, which essentially reduces the dimension of
the coflow scheduling problem. 

To model this problem, we define $X^i_{uv}$ as a binary constant which has the following value 
\begin{gather}
\label{eqn:optba-x}
X^i_{uv} =
\left\{
	\begin{array}{ll}
		1,  & \mbox{if link } E_{uv} \mbox{ is on the path } p_i, \\		
		0,  & \mbox{otherwise}.
	\end{array}
\right.
\end{gather} 
We formulate the OptBA problem as
\smallskip \\  
{\bf OptBA \\}
%
%
{\bf Objective:} \\
\vspace{-3mm}
\begin{gather}
\label{eqn:optba-object}
Minimize \ \ \max_{i=1,\ldots,N}\left\{ ct_i \ \big|\ ct_i = \frac{V_i}{b_i} \right\} 
\end{gather}
\vspace{-5mm}
\\
{\bf Subject to}
\vspace{-2mm}
\begin{gather}
\label{eqn:optba-constraint1}
\sum_{i=1}^{N} X^i_{uv} \cdot b_i \leq B_{uv},\ \forall u, \forall v, that\ E_{uv} \in \mathcal E, 
\end{gather}
\vspace{-5mm}
\begin{gather}
\label{eqn:optba-constraint2}
b_i \geq 0, \  i= 1, \ldots ,N
\end{gather}
\vspace{-5mm}
\\
{\bf Remarks:}
\begin{itemize}
    \item Constraints (\ref{eqn:optba-constraint1}) are bandwidth availability constraints, which limit that for any link $E_{uv}$, the
    overall amount of bandwidth allocated to flows using this link cannot exceed the available bandwidth of this link.
    \item Constraints (\ref{eqn:optba-constraint2}) are domain constraints ensuring the bandwidth allocated to each
    flow to be non-negative.
\end{itemize}
{\bf Convexity.} 
We observe that the functions in constraints (\ref{eqn:optba-constraint1}) and (\ref{eqn:optba-constraint2}) are
all affine on {\bf $b_i$} and thus convex. 
At the same time, the function $ct_i$ is convex, because its second derivative is nondecreasing when $b_i$ is larger than 0.
Therefore, according to~\cite{convexOpt}, the objective function, i.e., the pointwise maximum function of $ct_i$, is
also a convex function.
As a result, the OptBA problem is a convex optimization problem.

\subsection{Analytical Solution}
While existing convex optimization algorithms can be used to solve the OptBA problem,
we develop an analytical solution which is more efficient.
Specifically, we define $b^i_{uv}$ as the amount of bandwidth allocated to flow $F_i$ on link $E_{uv}$ when we
proportionally distribute the available bandwidth of link $E_{uv}$ to all flows that are using this link, i.e.,
\begin{equation}
b^i_{uv} = \frac{V_i}{\sum^N_{k=1}X^k_{uv}V_k}B_{uv}.
\end{equation} 
Let $\vec{b^*}$ be an vector$\{b^*_1, b^*_2, \ldots , b^*_N \}$, in which  $b^*_i$ is the minimum value of all existing
$b^i_{uv}$, i.e.,
\begin{gather}
\label{eqn:dsba-opt-sol}
\begin{aligned}
\vec{b^*} = \{b^*_1, b^*_2, \ldots , b^*_N \},  \text{where } b^*_i = \min_{E_{uv} \in \mathcal E_i}\big\{ b^i_{uv} \big\},
\end{aligned}
\end{gather}
in which $\mathcal E_i$ is the set of links on the route of flow $F_i$.
We then have that the vector $\vec{b^*}$ is an optimal solution of the OptBA problem.
To prove this, we first prove the following lemma.
\begin{lemma}
\label{lemma:l0}
Assume that for the vector $\vec{b}^*$, flow $F_p$ has the largest finish
time, i.e., $ct^*_p=V_p/b^*_p =\max_{i=1,\ldots,N}\{ct^*_i\}$.
Also assume that 
\begin{gather}
\label{eqn:lemma1-1}
b^*_p = \min_{E_{uv} \in \mathcal E_p}\big\{ b^p_{uv} \big\} = b^p_{u^*v^*} =
\frac{V_p}{\sum^N_{k=1}X^k_{u^*v^*}V_k}B_{u^*v^*}.
\end{gather}
Then for every other flow $F_i$ using link $E_{u^*v^*}$, we have 
\begin{gather}
\label{eqn:lemma1-2}
b^*_i = \min_{E_{uv} \in \mathcal E_i}\big\{ b^i_{uv} \big\} = b^i_{u^*v^*} =
\frac{V_i}{\sum^N_{k=1}X^k_{u^*v^*}V_k}B_{u^*v^*}.
\end{gather}
\end{lemma}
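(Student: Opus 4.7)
The plan is to combine two simple ingredients: a purely algebraic identity about the quantities $b^i_{uv}$, together with the maximality of $ct^*_p$. The crucial observation is that
\[
\frac{V_i}{b^i_{uv}} \;=\; \frac{\sum_{k=1}^{N} X^k_{uv} V_k}{B_{uv}},
\]
which depends only on the link $E_{uv}$ and not on the flow index $i$. In other words, if we imagine any flow having its bottleneck on a particular link, its completion time under $\vec{b^*}$ is a fixed number determined by that link alone. This is what will let the situation of $F_p$ be transferred to $F_i$.

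First I would note that because $F_i$ uses link $E_{u^*v^*}$, we have $E_{u^*v^*} \in \mathcal E_i$, so $b^i_{u^*v^*}$ is one of the terms appearing in the minimum (\ref{eqn:dsba-opt-sol}) that defines $b^*_i$. This gives the one-sided bound $b^*_i \le b^i_{u^*v^*}$, or equivalently $ct^*_i = V_i/b^*_i \ge V_i/b^i_{u^*v^*}$. Next, applying the flow-independence identity to the right-hand side and to the expression (\ref{eqn:lemma1-1}) for $b^*_p$, I get $V_i/b^i_{u^*v^*} = V_p/b^p_{u^*v^*} = ct^*_p$. Hence $ct^*_i \ge ct^*_p$. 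Combining this with the hypothesis that $F_p$ has the largest completion time, namely $ct^*_p \ge ct^*_i$, forces the equality $ct^*_i = ct^*_p$, which in turn upgrades $b^*_i \le b^i_{u^*v^*}$ to the required $b^*_i = b^i_{u^*v^*}$.

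Formally this also shows that the minimum defining $b^*_i$ is attained at $E_{u^*v^*}$ (possibly among other links), which is exactly the content of (\ref{eqn:lemma1-2}). I do not expect any real obstacle: once the identity above is isolated, every remaining step is a one-line manipulation. The only minor subtlety is to remember that $b^*_i$ is defined as a minimum over links, so the inequality $b^*_i \le b^i_{u^*v^*}$ goes in the correct direction to combine with the maximality of $ct^*_p$; the proof is essentially an interlocking pair of inequalities rather than an optimization argument.
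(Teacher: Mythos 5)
Your proof is correct and rests on the same key observation as the paper's, namely that $V_i/b^i_{uv} = \bigl(\sum_{k=1}^{N} X^k_{uv} V_k\bigr)/B_{uv}$ depends only on the link, so that $V_i/b^i_{u^*v^*} = V_p/b^p_{u^*v^*} = ct^*_p$. The paper phrases this as a contradiction argument (supposing some flow $F_q$ attains its minimum elsewhere and deriving $ct^*_q > ct^*_p$), whereas you run the same inequalities directly and squeeze $ct^*_i = ct^*_p$; this is just the contrapositive presentation of the identical argument, and if anything it cleanly sidesteps the strict-inequality bookkeeping in the paper's version.
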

\begin{proof}
To begin with, we assume that there exists a flow $F_q$ using link $E_{u^*v^*}$ but getting its allocated bandwidth when
another link $E_{u'v'}$ is considered, i.e.,
\begin{gather}
\label{eqn:lemma1-3}
b^*_q = \min_{E_{uv} \in \mathcal E_q}\big\{ b^q_{uv} \big\} = b^q_{u'v'} =
\frac{V_q}{\sum^N_{k=1}X^k_{u'v'}V_k}B_{u'v'}.
\end{gather}
Next, define $ct'_q$ as $V_q / b^q_{u^*v^*}$ and we have 
\begin{gather}
\label{eqn:lemma1-5}
ct^*_q = \frac{V_q}{b^q_{u'v'}} > \frac{V_q}{b^q_{u^*v^*}} = ct'_q.
\end{gather} 
On the other hand, based on Equation (\ref{eqn:lemma1-1}), we have
\begin{gather}
\label{eqn:lemma1-6}
ct^*_p = \frac{V_p}{b^p_{u^*v^*}} = \frac{V_q}{b^q_{u^*v^*}} = ct'_q.
\end{gather} 
Now using the Inequity (\ref{eqn:lemma1-5}) and Equation (\ref{eqn:lemma1-6}), we get
\begin{gather}
ct^*_q > ct'_q = ct^*_p,
\end{gather}
which conflicts with the assumption the flow $F_p$ has the largest completion time. 
Therefore, there does not exist a flow $F_q$ satisfying Equation (\ref{eqn:lemma1-3}).
We have proved the lemma now.
\end{proof}
\vspace{-1mm}
Based on Lemma~\ref{lemma:l0}, we have the following theorem.
\vspace{-1mm}
\begin{theorem}
The vector $\vec{b}^*$ defined in Equation~\ref{eqn:dsba-opt-sol} is an optimal solution of the OptBA problem.
\end{theorem}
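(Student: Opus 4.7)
The plan is to establish two things: feasibility of $\vec{b}^*$ for the OptBA problem, and a matching lower bound on the makespan that any feasible solution must satisfy. Non-negativity in constraint (\ref{eqn:optba-constraint2}) is immediate from the definition of $b^i_{uv}$. For the capacity constraint (\ref{eqn:optba-constraint1}), I would fix an arbitrary link $E_{uv}$ and note that by construction $b^*_i \leq b^i_{uv}$ for every flow $F_i$ with $X^i_{uv}=1$; summing over such flows gives $\sum_i X^i_{uv} b^*_i \leq \sum_i X^i_{uv} b^i_{uv} = B_{uv}$, using the definition of $b^i_{uv}$ as the proportional share. Hence $\vec{b}^*$ is feasible.

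The heart of the proof is the lower bound, and this is where Lemma~\ref{lemma:l0} does the real work. Let $F_p$ be the flow with the largest completion time under $\vec{b}^*$, and let $E_{u^*v^*}$ be the link realizing the minimum in $b^*_p = \min_{E_{uv}\in\mathcal E_p}\{b^p_{uv}\}$. I would then invoke Lemma~\ref{lemma:l0} to conclude that every flow $F_i$ using link $E_{u^*v^*}$ satisfies $b^*_i = b^i_{u^*v^*} = V_i B_{u^*v^*}/\sum_k X^k_{u^*v^*} V_k$. Summing $b^*_i$ over such flows therefore saturates the link exactly, and moreover each of these flows shares the common completion time
\begin{equation}
ct^*_p \;=\; \frac{V_p}{b^*_p} \;=\; \frac{\sum_k X^k_{u^*v^*} V_k}{B_{u^*v^*}}.
\end{equation}

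Now I would compare against an arbitrary feasible $\vec{b}$ with makespan $T = \max_i V_i/b_i$. Since $b_i \geq V_i/T$ for every $i$, the capacity constraint applied to the single link $E_{u^*v^*}$ gives
\begin{equation}
\frac{1}{T}\sum_i X^i_{u^*v^*} V_i \;\leq\; \sum_i X^i_{u^*v^*} b_i \;\leq\; B_{u^*v^*},
\end{equation}
so $T \geq \sum_k X^k_{u^*v^*} V_k / B_{u^*v^*} = ct^*_p$. This matches the makespan achieved by $\vec{b}^*$, proving optimality.

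The main obstacle is really packaged inside Lemma~\ref{lemma:l0}; once that structural fact is available, the optimality argument reduces to a simple volume-over-capacity bottleneck bound on one critical link. The one subtlety to watch in writing it up is to make sure that the link $E_{u^*v^*}$ exists and is unique enough for the argument: $F_p$ attaining the maximum completion time and the choice of a link realizing the minimum in its route together pin down exactly the link whose saturation certifies the lower bound.
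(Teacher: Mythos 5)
Your proof is correct and follows essentially the same route as the paper: both hinge on Lemma~\ref{lemma:l0} to show that the bottleneck link $E_{u^*v^*}$ is exactly saturated with every flow crossing it finishing at the common time $ct^*_p$, and then bound any feasible solution using the capacity of that single link. The only differences are presentational — you verify feasibility of $\vec{b}^*$ explicitly (the paper omits this step) and you derive the matching lower bound $T \geq \sum_k X^k_{u^*v^*}V_k / B_{u^*v^*}$ directly from $b_i \geq V_i/T$, whereas the paper argues by contradiction that a strictly better solution would force $b'_i > b^*_i$ for every flow on the link and thus overload it.
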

\begin{proof}
Assume that flow $F_p$ has the largest completion time and $b^*_p$ obtains its value when link $E_{u^*v^*}$ is
considered. 
Then with Lemma~\ref{lemma:l0}, for every flow $F_i$ using $E_{u^*v^*}$, there exists
\begin{equation}
\label{eqn:thm1-1}
b^*_i = b^i_{u^*v^*} = \frac{V_i}{\sum^N_{k=1}X^k_{u^*v^*}V_k}B_{u^*v^*}. 
\end{equation}
Therefore, for each flow $F_i$, we have
\begin{gather}
\label{eqn:thm1-2}
ct^*_i = \frac{V_i}{b^*_i} = \frac{V_p}{b^*_p} = ct^*_p.
\end{gather}
Now assume that another vector $\vec{b'}$ is the optimal solution and flow $F_q$ has the largest completion time. 
We then have
\begin{gather} 
\label{eqn:thm1-4}
ct'_i \leq ct'_q < ct^*_p = ct^*_i,\ \forall i,\ that\ X^i_{u^*v^*} \neq 0.
\end{gather}
Naturally, we have $b'_i > b^*_i,\ \forall i,\ that\ X^i_{u^*v^*} \neq 0.$
On the other hand, from Equation (\ref{eqn:thm1-1}), we can get
\begin{gather}
\label{eqn:thm1-6}
\sum^{N}_{i=1} X^i_{u^*v^*}b^*_i = B_{u^*v^*}.
\end{gather}
Putting them together, we have
\begin{gather}
\sum^{N}_{i=1} X^i_{u^*v^*}b'_i > \sum^{N}_{i=1} X^i_{u^*v^*}b^*_i = B_{u^*v^*},
\end{gather}
which is infeasible. 
As a result, there does not exist a feasible solution better than $\vec{b^*}$. 
Therefore, the vector $\vec{b^*}$ defined in Equation (\ref{eqn:dsba-opt-sol}) is an optimal solution of the OptBA problem.
\end{proof}

\section{The Coflow Scheduling (CoS) Problem}
\label{sec:cos}
In this section, we formulate the CoS problem as a NLMIP problem, present a non-linear relaxation, and further transform the relaxation
to a solvable convex optimization problem.

We start with defining variable $x^i_{uv}$ as
an integer variable with three possible values (-1, 0, and 1):
\begin{gather}
\label{eqn:cos-x}
x^i_{uv} =
\left\{
	\begin{array}{ll}
		1,  & \mbox{if } F_i \mbox{ flows from node u to v}, \\
		-1, & \mbox{if } F_i \mbox{ flows from node v to u}, \\
		0,  & \mbox{otherwise}.
	\end{array}
\right.
\end{gather} 
Note that there exists either $x^i_{uv}$ or $x^i_{vu}$, corresponding to the existence of either $E_{uv}$ or
$E_{vu}$. We then have \\
{\bf CoS \\}
{\bf Variables}
\begin{itemize}
  \item $b_{i}$: bandwidth allocated to the path $p_i$ selected for $F_i$.
  \item $x^i_{uv}$: an integer variable defined by Equation (\ref{eqn:cos-x}).
  \item $ct_i$: the completion time of flow $F_i$.
\end{itemize}
{\bf Constants}
\begin{itemize}
  \item $N$: the number of flows in the set $\mathcal F$.  
  \item $B_{uv}$: the available bandwidth of the link $E_{uv}$. 
  \item $\mathcal N(u)$: the set of neighbor nodes of node $u$.  
\end{itemize}
{\bf Objective:} \\
\begin{gather}
\label{eqn:cos-object}
Minimize \ \ \max_{i=1,\ldots,N}\left\{ ct_i \ \big|\ ct_i = \frac{V_i}{b_i} \right\}
\end{gather}
\\
{\bf Subject to}
\begin{gather}
\label{eqn:cos-src}
\sum_{w \in \mathcal N(s_i)} x^i_{s_i w} - \sum_{w \in \mathcal N(s_i)} x^i_{w s_i} = 1, \ \ i=1,\ldots,N, \\
\label{eqn:cos-dst}
\sum_{w \in \mathcal N(s_i)} x^i_{w d_i} - \sum_{w \in \mathcal N(s_i)} x^i_{d_i w} = 1, \ \ i=1,\ldots,N, \\
\label{eqn:cos-flow}
\sum_{w \in \mathcal N(u)} x^i_{wu} - \sum_{w \in \mathcal N(u)} x^i_{uw} = 0, \ \forall i, \forall u \notin \{ s_i, d_i\}, \\ 
\label{eqn:cos-bw}
\sum^N_{i=1} \big| x^i_{uv}\cdot b_i \big| \leq B_{uv}, \ \forall u, \forall v, that\ E_{uv} \in \mathcal E, \\
\label{eqn:cos-domain-b}
b_i \geq 0, \  i= 1, \ldots ,N, \\
\label{eqn:cos-domain-x}
x^i_{uv} \in \{-1,0,1\}, \ \forall i,\ \forall u,\ \forall v.
\end{gather}
\\
{\bf Remarks:}
\begin{itemize}
    \item Constraints (\ref{eqn:cos-src}) ensure that data is sent out from the source of any flow through only one
    link, because a positive value of $x^i_{s_i w}$ and a negative value of $x^i_{w s_i}$ mean that data is going out
    from the source node via link $E_{s_i w}$ or $E_{w s_i}$. 
    Constraints (\ref{eqn:cos-dst}) ensure that data is transferred into the any destination node through only one link.
    \item Constraints (\ref{eqn:cos-flow}) ensure flow conservation, i.e., for any flow $F_i$ and any
    intermediate node $u$, the number of links via which data is transferred into node $u$ should be equal to the
    number of links via which data is sent out. Note that constraints (\ref{eqn:cos-src})-(\ref{eqn:cos-flow}) together enforce
    that only one path is selected to transfer data for each flow.
    \item Constraints (\ref{eqn:cos-bw}) enforce that the overall bandwidth allocated to flows in coflow $\mathcal {CF}$ on
    any link $E_{uv}$ does not exceed the total available bandwidth of that link.
    \item Constraints (\ref{eqn:cos-domain-b}) and (\ref{eqn:cos-domain-x}) are domain constraints. 
\end{itemize}

\subsection{A Relaxation of the CoS Problem and An Equivalent Convex Optimization Problem}
To solve the CoS problem, we first consider its relaxation in which the integer variable $x^i_{uv}$ is relaxed to a real variable, i.e., changing the
constraint (\ref{eqn:cos-domain-x}) to
\begin{equation}
-1 \leq x^i_{uv} \leq 1, \ \forall i,\ \forall u,\ \forall v.
\end{equation}
We name the relaxed problem as {\bf CoS-Relax}.

Subsequently, we transform the CoS-Relax problem into an equivalent convex optimization problem. 
We start from defining variable $T$ as the CCT, i.e., 
\begin{equation}
T = \max_{i=1,\ldots,N}\left\{ \frac{V_i}{b_i} \right\}. \nonumber
\end{equation}   
The objective function (\ref{eqn:cos-object}) is tranformed to
\begin{gather}
Minimize \ \ T \nonumber \\
\label{eqn:cos-liear-load-pre}
\frac{V_i}{b_i} \leq T , \ i=1,\ldots,N.
\end{gather}
We further define variable $q_i$ as
\begin{gather}
\label{eqn:cos-b2q}
q_i	 = T \cdot b_i, 
\end{gather}
By substituting $q_i$ into constraint (\ref{eqn:cos-bw}), we have
\begin{equation}
\label{eqn:cos-relax-eqv-bw}
\sum^N_{i=1} \big| x^i_{uv}\cdot q_i \big| \leq T \cdot B_{uv}, \ \forall u, \forall v, that\ E_{uv} \in \mathcal E. 
\end{equation}
Next, we define variable $p^i_{uv}$ as 
\begin{equation}
\label{eqn:cos-x2p}
p^i_{uv} = x^i_{uv} \cdot q_i.  
\end{equation}
By substituting $q_i$ and $p^i_{uv}$ into constraint (\ref{eqn:cos-liear-load-pre}), (\ref{eqn:cos-relax-eqv-bw}) and all other constraints, 
we get an equivalent problem of CoS-Relax, named CoS-Relax-Cvx, as shown below.
\\ {\bf CoS-Relax-Cvx \\}
\vspace{-2mm}
\begin{gather}
Minimize \ \ T
\end{gather}
\vspace{-5mm}
\\
{\bf Subject to}
\begin{gather}
q_i \geq V_i, \ i=1,\ldots,N, \\
\sum_{w \in \mathcal N(s_i)} p^i_{w s_i} - \sum_{w \in \mathcal N(s_i)} p^i_{s_i w} = -q_i \ \ i=1,\ldots,N, \\
\sum_{w \in \mathcal N(s_i)} p^i_{w d_i} - \sum_{w \in \mathcal N(s_i)} p^i_{d_i w} = q_i \ \ i=1,\ldots,N, \\
\sum_{w \in \mathcal N(u)} p^i_{wu} - \sum_{w \in \mathcal N(u)} p^i_{uw} = 0, \ \forall i, \forall u \notin \{ s_i, d_i\}, \\ 
\label{eqn:cos-linear-bw-pre}
\sum^N_{i=1} \big| p^i_{uv} \big| \leq T \cdot B_{uv}, \ \forall u, \forall v, that\ E_{uv} \in \mathcal E, \\
T \geq 0.
\end{gather}
All constraints in the CoS-Relax-Cvx problem are affine, except the constraints (\ref{eqn:cos-linear-bw-pre}) which are convex constraints,
because the absolute value of $p^i_{uv}$ is a convex function and the sum of convex functions is still convex.
Therefore, the CoS-Relax-Cvx problem is a convex optimization problem which can be solved by using convex optimization algorithms~\cite{convexOpt}.

\section{The CoRBA Algorithm and Its Simplified Version: CoRBA-fast}
\label{sec:corba}
To solve the CoS problem, we propose an algorithm, called Coflow Routing and Bandwidth Allocation (CoRBA).

The CoRBA algorithm has three phases:
(I) Obtaining a solution of the CoS-Relax problem by solving the CoS-Relax-Cvx problem; 
(II) Determining an initial solution of the CoS problem based on the
solution of the CoS-Relax problem;
(III) Utilizing a local search procedure to further optimize the obtained initial solution.
Algorithm~\ref{algo:corba} shows the pseudocode of CoRBA. 
We now introduce the details of each phase. 
 \\ \noindent
{\bf Phase I: Solve the relaxed problem.}
To begin with, CoRBA solves the CoS-Relax-Cvx problem by using convex algorithms~\cite{convexOpt}.
Let the solution be $T'$, $q'_i$, and ${p'}^{i}_{uv}$.

Subsequently, CoRBA calculates the solution of CoS-Relax, denoted by ${x'}^i_{uv}$ and $b'_i$, using Equations
(\ref{eqn:cos-b2q}) and (\ref{eqn:cos-x2p}). 

\begin{algorithm}[t]
\caption{The CoRBA Algorithm}
\label{algo:corba}
\begin{algorithmic}[1]
\Function{CoRBA}{\mbox{Network} $\mathcal G = <\mathcal V,\mathcal E,\mathcal B>$, \mbox{Coflow} $\mathcal{CF}$}
\Statex {\em Phase I:}
	\State Solve CoS-Relax-Cvx and get $\{T',\ q'_i,\ {p'}^{i}_{uv}\}$;
	\State Calculate ${x'}^i_{uv}$ and $b'_i$ using Equations (\ref{eqn:cos-b2q}) and (\ref{eqn:cos-x2p});
\Statex {\em Phase II:}	
	\For{each $F_i \in \mathcal{CF}$}		
		\State $p^{MC}_i \gets$ max capacity path using ${x'}^i_{uv}$ as link capacity;
		\State $x^i_{uv} \gets 1$ if $E_{uv}$ on $p^{MC}_i$; otherwise,  $x^i_{uv} \gets 0$; 
	\EndFor	
	\State Calculate $b_i$ using Equation (\ref{eqn:corba-bi});
	\State CCT $\gets max_{i=1,\ldots,N}\{ct_i\ |\ ct_i = V_i/b_i\}$;	
\Statex {\em Phase III:}
	\State Update $\mathcal B$ according to $x^i_{uv}$ and $b_i$;
	\While{true}
		\State $\mathcal F_{max} \gets \{ F_i \ |\ ct_i == CCT \}$;
		\For{each $F_i \in \mathcal F_{max}$}
			\State $\mathcal E_{congest} \gets \{ E_{uv}\ |\ E_{uv} \text{ is on } p_i\ \& \ B_{uv} == 0\}$;		
			\State Add $b_i$ back to $\mathcal B$ along route of $F_i$;
			\State Set each $E_{uv} \in \mathcal E_{congest}$ as unavailable;
			\State $p^{MC}_i \gets$ new max capacity path;
			\State Reset $x^i_{uv}$ according to new $p^{MC}_i$;
			\State Re-calculate $b_i$, $ct_i$, and CCT;
			\State {\bf if} $ct_i$ is reduced {\bf then break};
			\State {\bf else} Reverse all changes made for $F_i$;
		\EndFor
		\State {\bf if} no $ct_i$ is reduced {\bf then break};
	\EndWhile 
	\State {\bf return} $x^i_{uv}$, $b_i$, and CCT;
\EndFunction
\end{algorithmic}
\end{algorithm}
\noindent {\bf Phase II: Obtain an initial solution of the CoS problem.}
In this phase, the CoRBA algorithm obtains an initial solution of the CoS problem in two steps:
\begin{itemize}[align=left,leftmargin=*]
\item First, it determines the value of $x^i_{uv}$, i.e., obtaining a route for each flow.
Specifically, for each flow $F_i$, CoRBA uses ${x'}^i_{uv}$ as the capacity of $E_{uv}$ and find the max
capacity path~\cite{MaxCapaPath:punnen1991linear} (denoted by $p^{MC}_i$) between the source and destination of this flow. 
CoRBA then sets $x^i_{uv}$ as 1 for each link on that path and set its value as 0 for all other links, i.e.,
\begin{gather}
\label{eqn:corba-x}
x^i_{uv} =
\left\{
	\begin{array}{ll}
		1,  & \mbox{if } E_{uv} \mbox{ is on the path } p^{MC}_i, \\		
		0,  & \mbox{otherwise}.
	\end{array}
\right.
\end{gather}
\item Second, CoRBA determines the value of $b_i$, i.e., the amount of bandwidth allocated to each flow. 
Because the route of each flow is already determined, 
the CoS problem is naturally reduced to the OptBA problem which is already solved in section~\ref{sec:opt-ba}.
Therefore, according to Equation (\ref{eqn:dsba-opt-sol}), CoRBA calculates the value of $b_i$ as
\begin{gather}
\begin{aligned}
\label{eqn:corba-bi}
b_i = \min_{E_{uv} \in \mathcal E_i}\big\{ \frac{V_i}{\sum^N_{k=1}|x^k_{uv}|V_k}B_{uv}. \big\}.
\end{aligned}
\end{gather}
\end{itemize} \
\noindent {\bf Phase III: Local search.} 
In this phase, CoRBA utilizes a local search procedure to further improve the initial solution.

In this local search procedure, CoRBA runs in iterations. 
In each iteration, it starts with identifying all flows with the largest completion time and putting them into a
set named $\mathcal F_{max}$. 
Subsequently, CoRBA iteratively considers each flow in the set $\mathcal F_{max}$.
For each flow $F_i$ in $\mathcal F_{max}$, the CoRBA algorithm puts all links that are on the route of this flow and
that do not have any available bandwidth (Due to the initial bandwidth allocation to flows) into set $\mathcal E_{congest}$;
it then sets all links in $\mathcal E_{congest}$ as unavailable and find a new max capacity route for flow $i$;
in the following, it temporarily changes the route of flow $F_i$ to the new route and re-calculates $b_i$ and CCT for current routing plan. 
If the completion time of flow $F_i$ is reduced in the new solution, the CoRBA algorithm stops considering all
other flows in the set $\mathcal F_{max}$ and starts a new iteration;
otherwise, it reverts all temporary changes that it has made and moves to the next flow in the set $\mathcal F_{max}$. 

If CoRBA cannot improve the completion time of any flow in $\mathcal F_{max}$ in some iteration, it stops and outputs current
solution as the final solution, as it cannot improve the CCT anymore.

\subsection{CoRBA-fast: A Simplified Version of CoRBA}
The CoRBA algorithm begins with solving the CoS-Relax-Cvx problem which is a convex optimization problem. 
When the problem's scale is large enough, solving this problem can be time-consuming. 
On the other hand, we observe that the local search procedure in the CoRBA algorithm can be used to optimize any feasible coflow schedules.
With such observations, we propose CoRBA-fast, a simplified version of CoRBA with less time complexity.
 
CoRBA-fast has two phases: First, it obtains an initial solution; Second, it utilizes the local search
procedure used in the CoRBA algorithm to optimize the initial solution.
 
To obtain an initial solution, for each flow $F_i$, the CoRBA-fast algorithm set the route of this flow as the shortest maximum capacity path (i.e.,
the shortest path with the maximum capacity). 
Subsequently, CoRBA-fast calculates $b_i$ and CCT based on the determined routes.  

\section{Performance Evaluation}
\label{sec:sim}

\subsection{Performance Evaluation through Offline Simulations}
In this section, we evaluate CoRBA and CoRBA-fast through offline simulations. 

\subsubsection{Simulation Setup} 
\label{subsubsec:sim-setup}
In a single run of the offline simulation, the proposed algorithms are called to schedule a random coflow on a data center network.
\\ \noindent {\bf Data center network.}
For the network, we use a modified FatTree~\cite{fattree} architecture. 
A $k$-array FatTree network has $k$ pods, where each pod has $k$/2 Top-of-Rack (ToR) switches and $k$/2 aggregation switches. 
$(k/2)^2$ core switches are used connect the aggregation switches and each ToR switch connects a set of $k$/2 hosts.  
To better evaluate our algorithms, we increase the oversubscription ratio and therefore introduce more
competition on bandwidth at the aggregation and core levels. 
To achieve this goal, we multiply the number of hosts in a rack by a factor $\alpha_{over}$.
By doing so, a $k$-array modified FatTree contains $\alpha_{over} \cdot k^3/4$ hosts now.
In our simulations, we set $\alpha_{over}$ as 2 and set the each link's capacity as 10 Gbps.

\noindent {\bf Noise flows.} We introduce noise flows to simulate the complex traffic condition in real data center. 
Specifically, in each simulation, a set of $\alpha_{over} \cdot k^3$ noise flows are generated. 
The source and destination of each noise flow is randomly selected and the duration follows an uniform distribution in $[1,150]$.  
We randomly select a shortest path as its route and allocate a random amount of
bandwidth along that route. 
\smallskip \\ \noindent
{\bf Coflow.}
We randomly generate a coflow with $N$ flows. 
For each flow, we randomly select two hosts as its source and destination.
We further set the maximum possible volume of a flow (denoted by $V_{max}$) as 1000 Gb and determines the volume of a flow (i.e., $V_i$) by
using an uniform distribution in $[\beta\cdot V_{max}, V_{max}]$. 
In our simulations, we set $\beta$ as 0.7.

Each data point in our simulation results is an average of 20 simulations performed on an Intel 2.5 GHz processor.

\begin{figure*}[!t]
\begin{center}
\subfigure[Coflow Completion Time]{
\includegraphics[width=1.65in]{./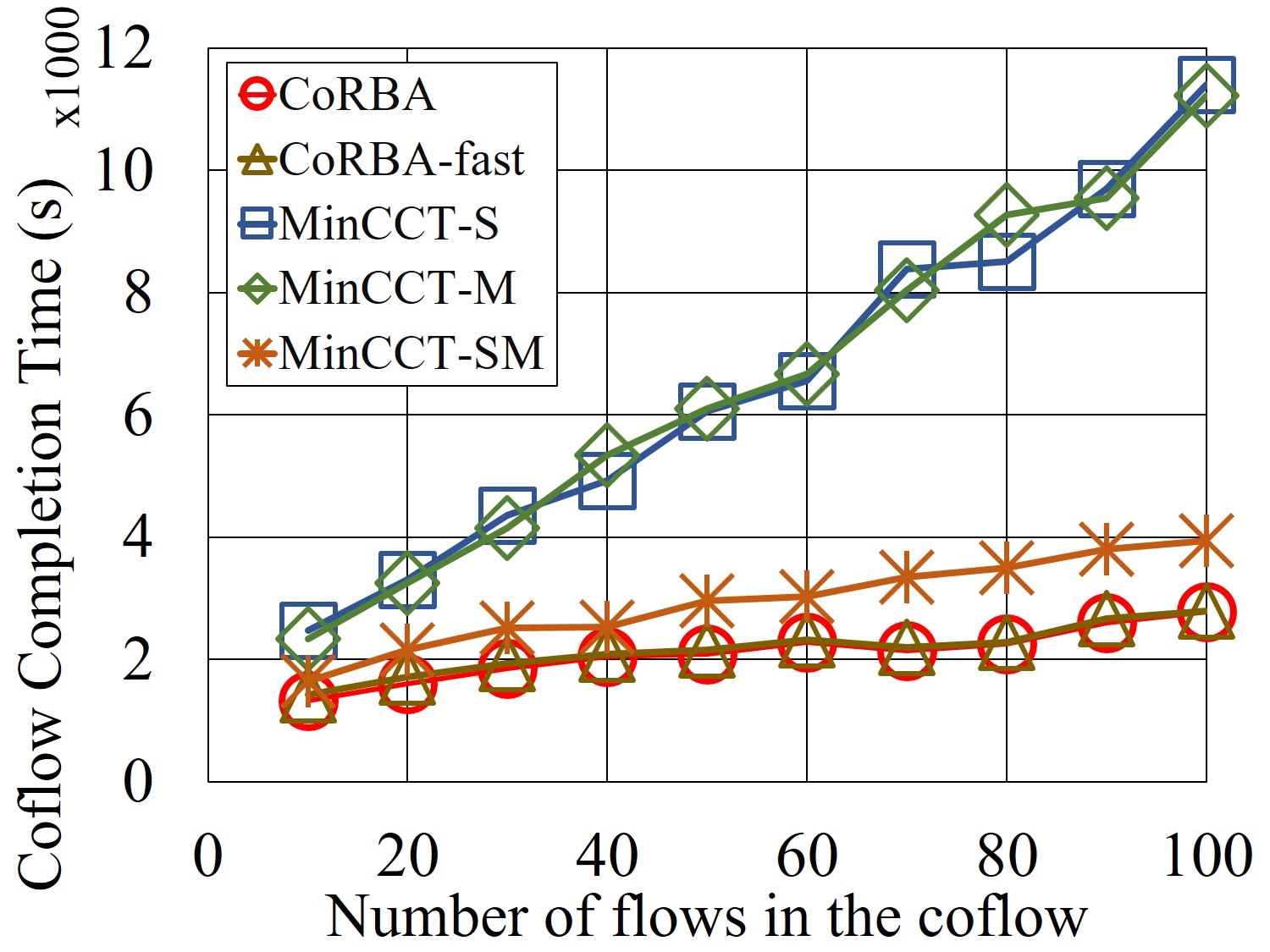}
\label{fig:cos-incflow-cct}
}
\subfigure[Allocated bandwidth]{
\includegraphics[width=1.65in]{./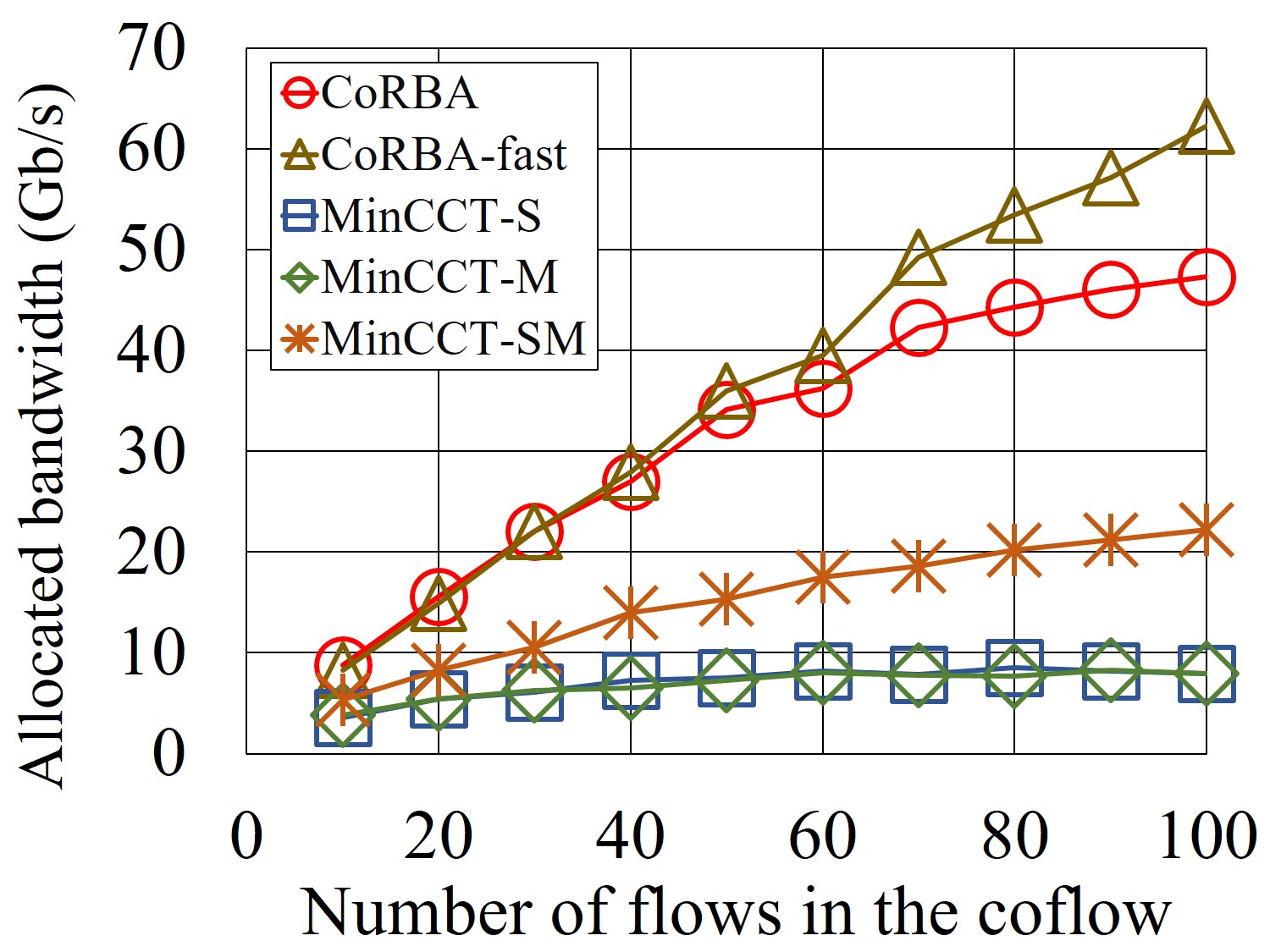}
\label{fig:cos-incflow-bw}
}
\subfigure[Average length of flow route]{
\includegraphics[width=1.65in]{./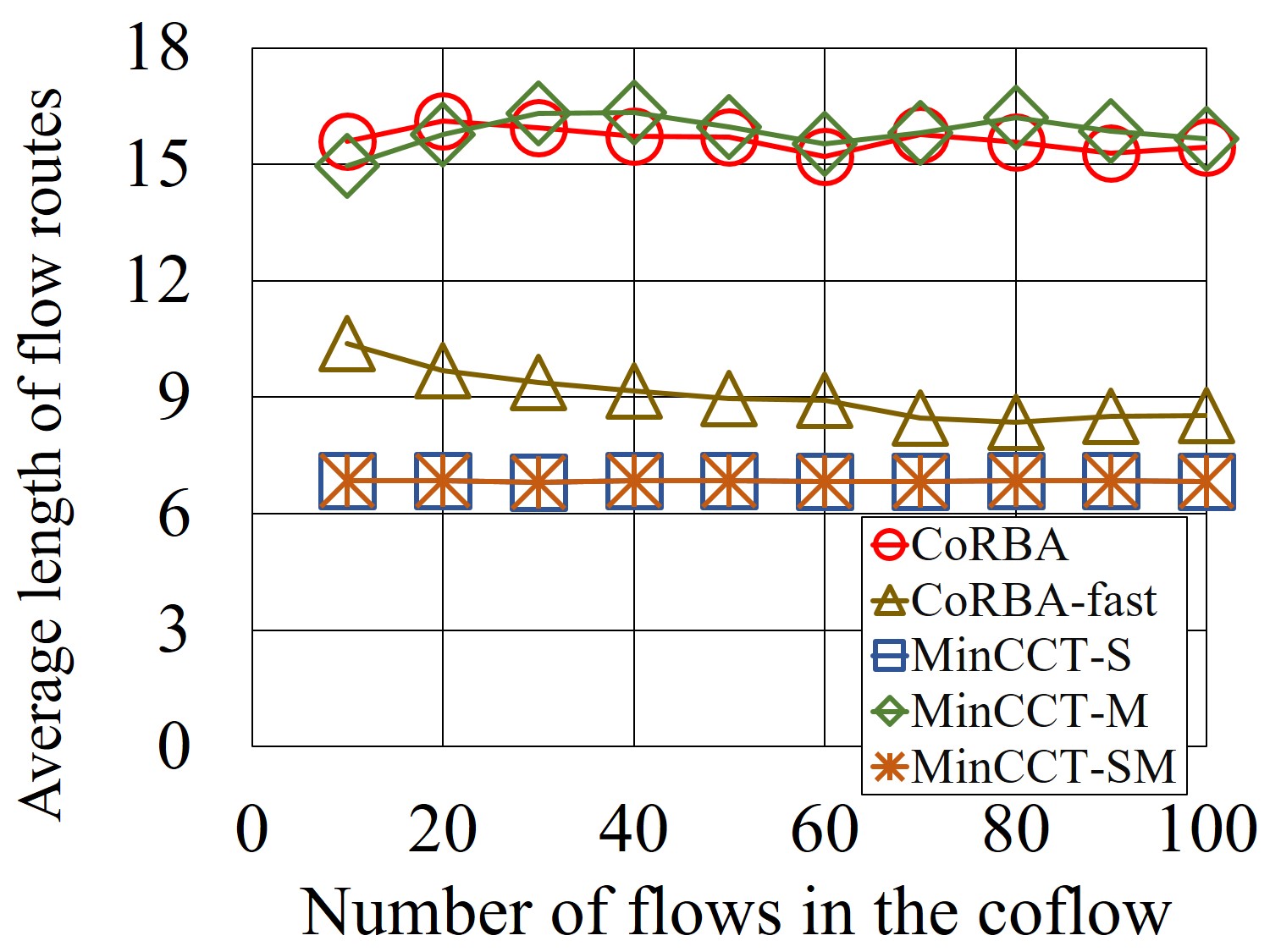}
\label{fig:cos-incflow-pathlen}
}
\subfigure[Running time]{
\includegraphics[width=1.65in]{./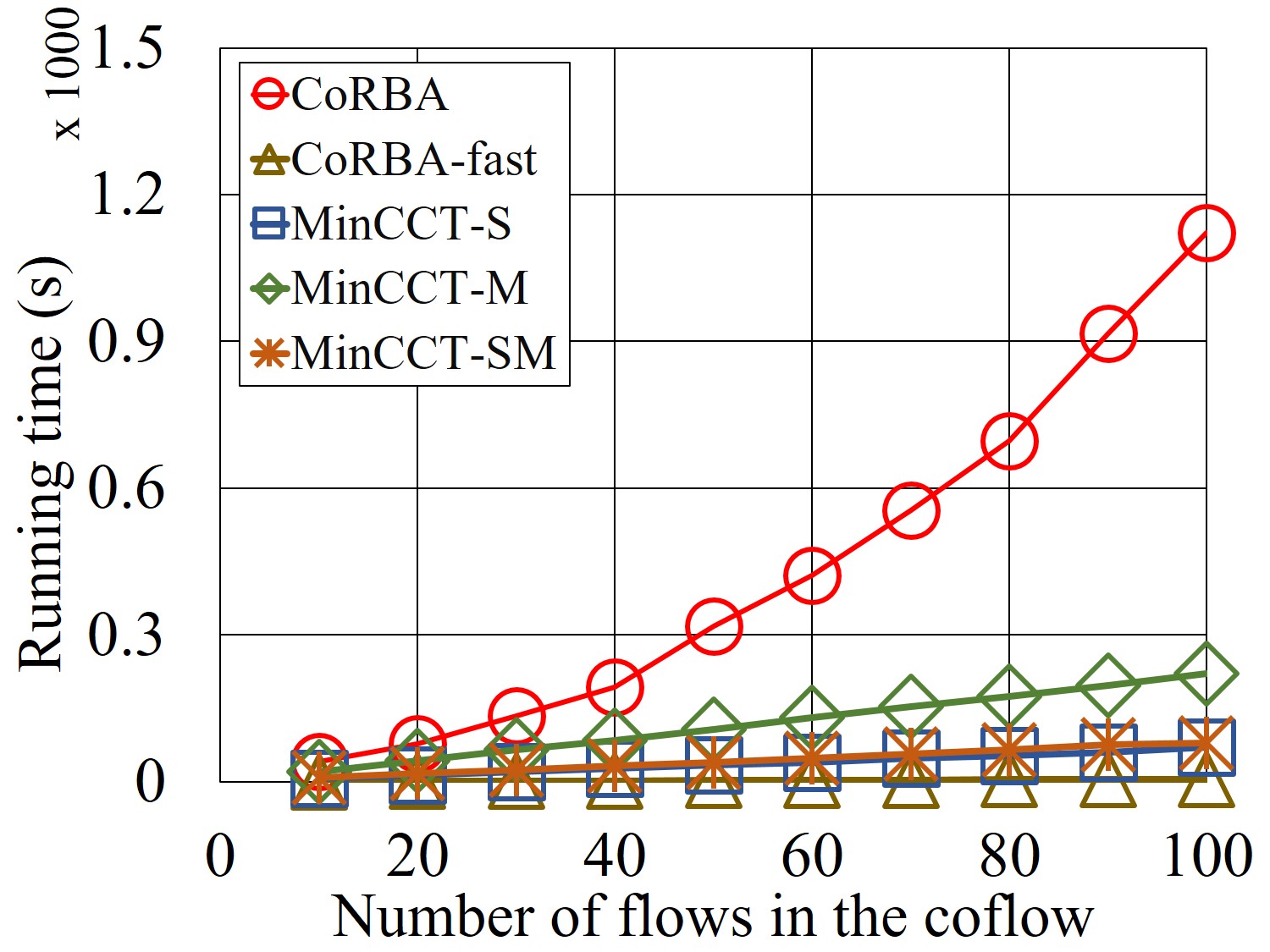}
\label{fig:cos-incflow-runtime}
}
\vspace{-2mm}
\caption{Performance of CoRBA and CoRBA-fast when scheduling different numbers of flows in a FatTree network with 1617 nodes.}
\label{fig:cos-incflow}
\vspace{-5mm}
\end{center}
\end{figure*}

\begin{figure*}[!t]
\begin{center}
\subfigure[Coflow Completion Time]{
\includegraphics[width=1.65in]{./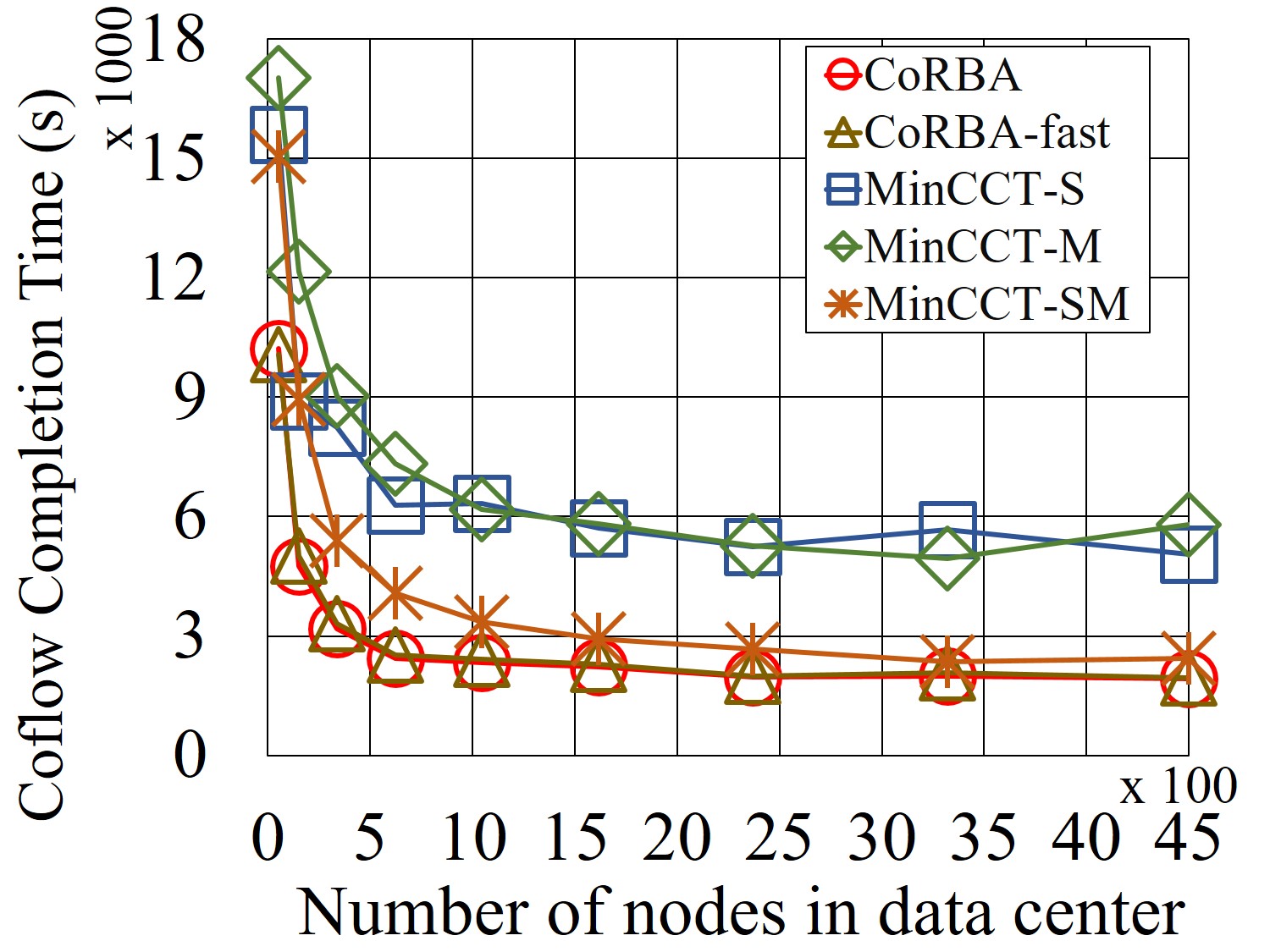}
\label{fig:cos-incnet-cct}
}
\subfigure[Allocated bandwidth]{
\includegraphics[width=1.65in]{./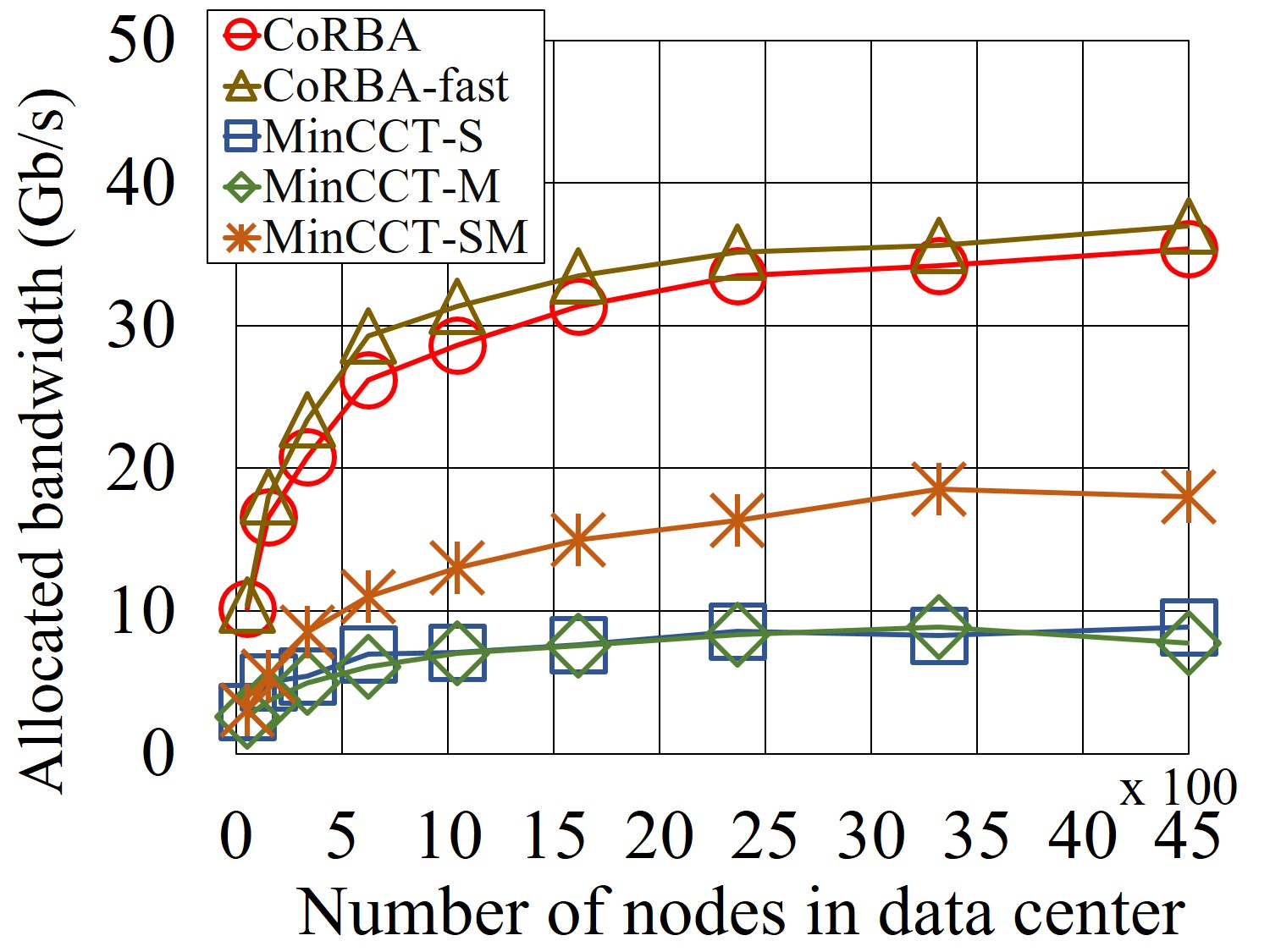}
\label{fig:cos-incnet-bw}
}
\subfigure[Average length of flow route]{
\includegraphics[width=1.65in]{./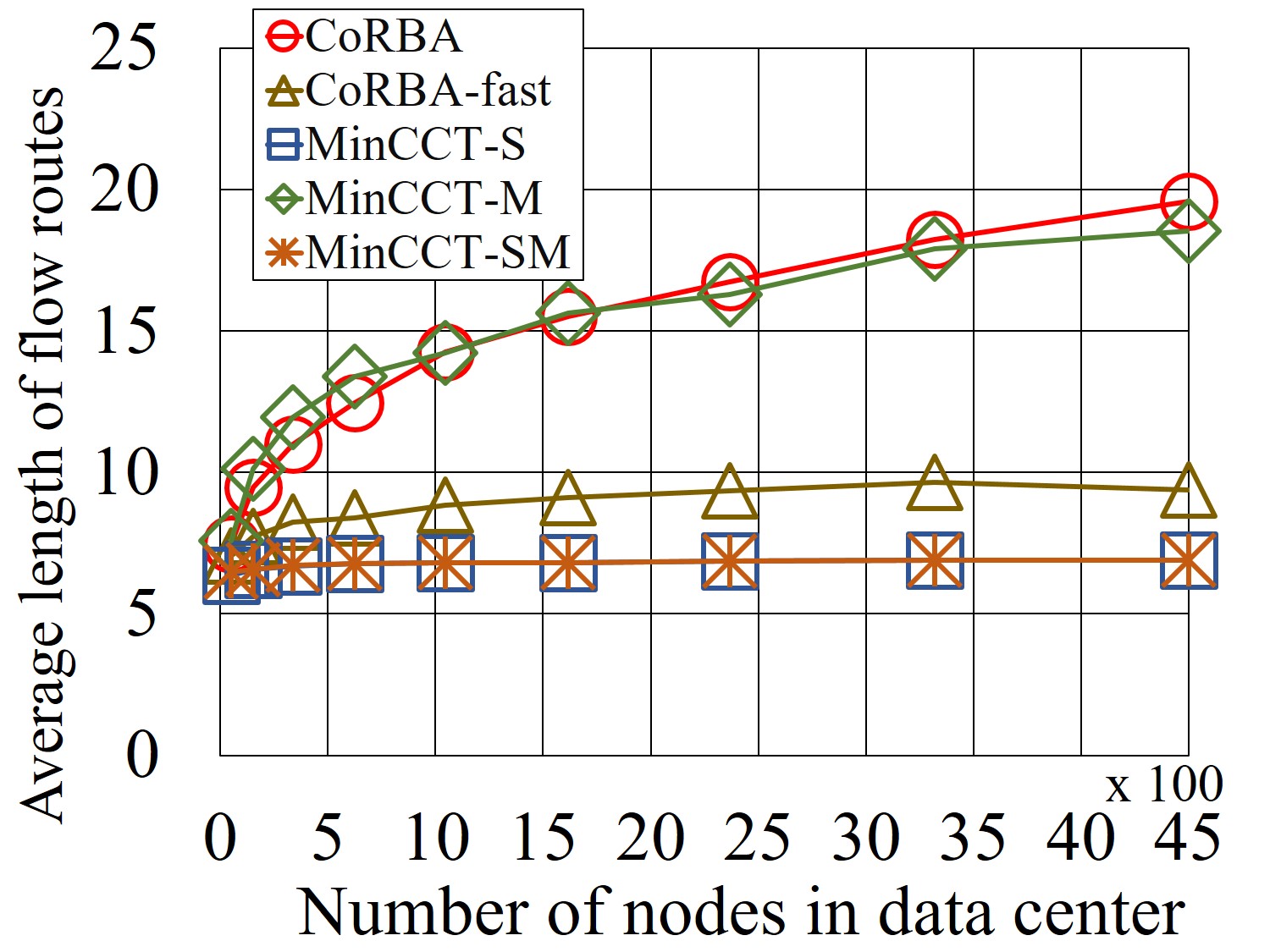}
\label{fig:cos-incnet-pathlen}
}
\subfigure[Running time]{
\includegraphics[width=1.65in]{./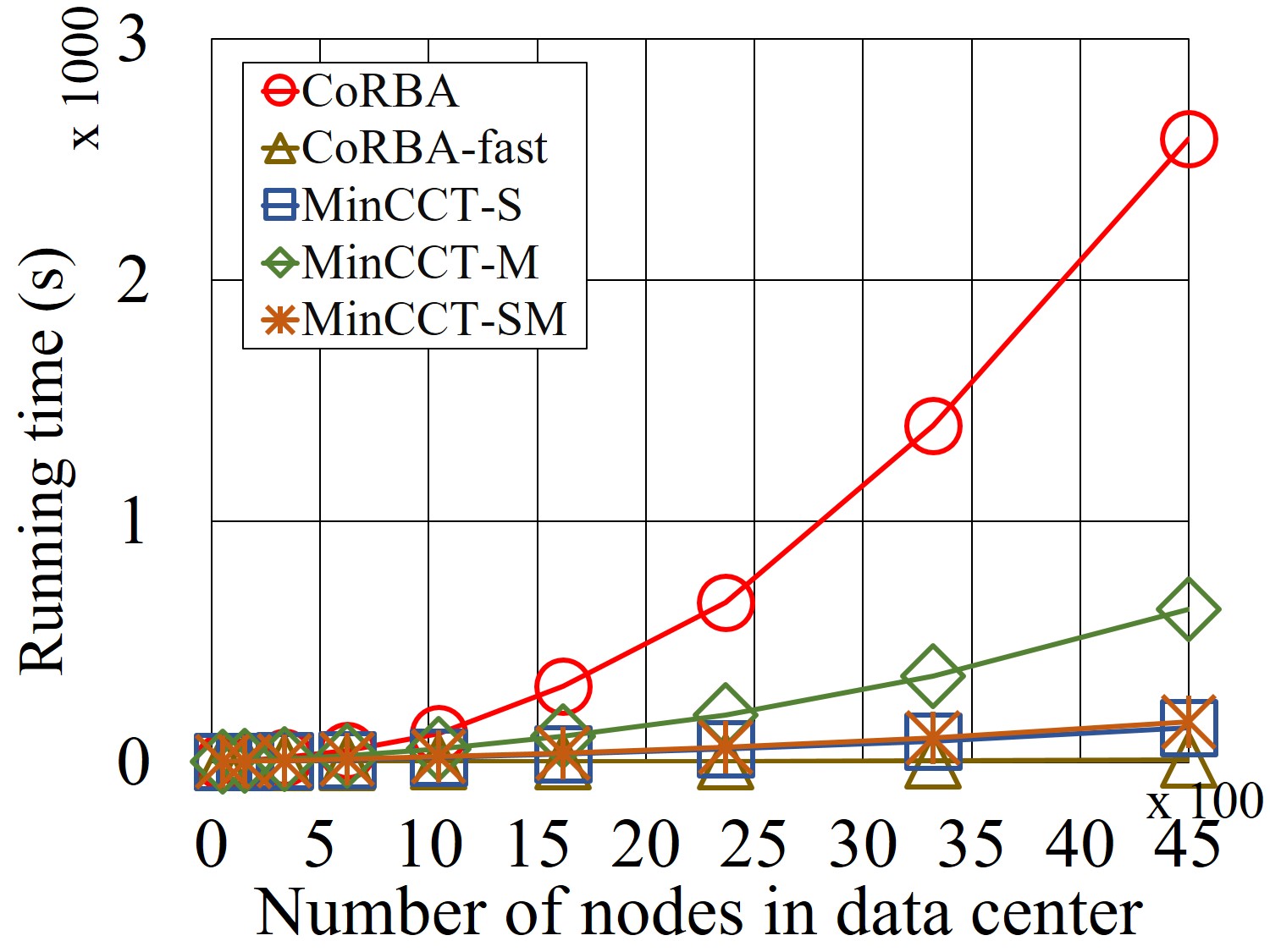}
\label{fig:cos-incnet-runtime}
}
\vspace{-2mm}
\caption{Performance of CoRBA and CoRBA-fast when scheduling a coflow with 50 flows in FatTree networks with different number of nodes.}
\label{fig:cos-incnet}
\vspace{-7mm}
\end{center}
\end{figure*} 

\subsubsection{Evaluation Metrics} 
We use four evaluation metrics. 
\noindent {\bf Coflow Completion Time.} This metric is the objective of the CoS problem. Hence, it is the most important metric. 
\noindent {\bf Allocated bandwidth.} This metric is the overall bandwidth allocated to $\mathcal{CF}$, which equals to $\sum^N_{i=1} b_i$.
It demonstrates how an algorithm performs on the aspect of bandwidth allocation. 
\noindent {\bf Average length of flow route.} This metric is the average number of hops on the route of flows in $\mathcal{CF}$.
It gives us a sense about how an algorithm performs on the aspect of routing. 

\noindent {\bf Running time.} Running time of an algorithm is also important. It gives a sense of the scalability of that algorithm.

\subsubsection{Comparison Algorithms}
As a comparison algorithm of CoRBA, we use a modified version of the MinimizeCCT
algorithm used in RAPIER~\cite{coflow:2015rapier}.
Given a coflow, MinimizeCCT selects a route from a set of candidates for each flow and allocates bandwidth to these flows while minimizing the CCT.
Although MinimizeCCT looks similar to the CoRBA algorithm, it does not perform true routing for the coflow. 

We modify MinimizeCCT to generate $K$ paths as candidates for each flow before scheduling the coflow.
We name this modified version as {\bf MinCCT}. 
We further propose three variants of MinCCT in which the $K$ potential paths are (i) $K$ shortest paths,
(ii) $K$ maximum capacity paths, and (iii) $K$ shortest maximum capacity paths.
We denote them by ``MinCCT-S'', ``MinCCT-M'', and ``MinCCT-SM'' respectively.
In our simulations, we set the value of $K$ as 5.


\subsubsection{Evaluation Results of CoRBA} 
\smallskip \noindent {\bf Performance with Increasing Size of the Coflow.}
In this simulation, we study how CoRBA and CoRBA-fast perform as the number of flows in the coflow increases from 10 to 100.
We uses a 16-array modified FatTree with $\alpha_{over} = 2$, which contains 1617 nodes. 

Fig.~\ref{fig:cos-incflow-cct} shows the CCT generated by each algorithm. 
While the CCT generally increases along with the expansion of the coflow,
CoRBA generates the smallest CCT. 
When the coflow contains 100 flows, the CCT of CoRBA is 300\% smaller than that of MinCCT-M and MinCCT-S, and 40\% smaller than that of MinCCT-SM. 
Meanwhile, CoRBA-fast generates almost the same CCT compared to CoRBA.

Fig.~\ref{fig:cos-incflow-bw} shows the total bandwidth allocated to the coflow. 
As can be seen, CoRBA allocates about 110\%-500\% more bandwidth than the MinCCT algorithms but about 25\% less bandwidth than CoRBA-fast.
Fig.~\ref{fig:cos-incflow-pathlen} shows the average length of flow route. 
We observe that CoRBA and MinCCT-M generates the longest flow route length, followed by CoRBA-fast, MinCCT-SM and MinCCT-S.

Putting Figs.~\ref{fig:cos-incflow-cct},~\ref{fig:cos-incflow-bw} and~\ref{fig:cos-incflow-pathlen} together, we can see that when the size of the
coflow increases, CoRBA and CoRBA-fast are able to allocate more bandwidth to the coflow and thereby
keep the growth of CCT relatively flat, as observed in Fig.~\ref{fig:cos-incflow-cct}.
We attribute this to their ability of routing based on the bandwidth availability of the whole network.
When the coflow expands, CoRBA and CoRBA-fast are able to route flows via different paths and utilize more bandwidth.
In contrast, the MinCCT algorithms are restricted by the limited number of candidate paths and cannot sufficiently utilize the available resources in
the network, which leads the rapid increase of the CCT.

At last, Fig.~\ref{fig:cos-incflow-runtime} shows the algorithm running time. When the coflow contains 100 flows, the running time of CoRBA is about
1100 seconds, while that of CoRBA-fast, MinCCT-S, MinCCT-M, and MinCCT-SM is 5 seconds, 70 seconds, 220 seconds, and 80 seconds respectively.


\begin{figure*}[!t]
\begin{center}
\subfigure[Coflow Completion Time]{
\includegraphics[width=1.65in]{./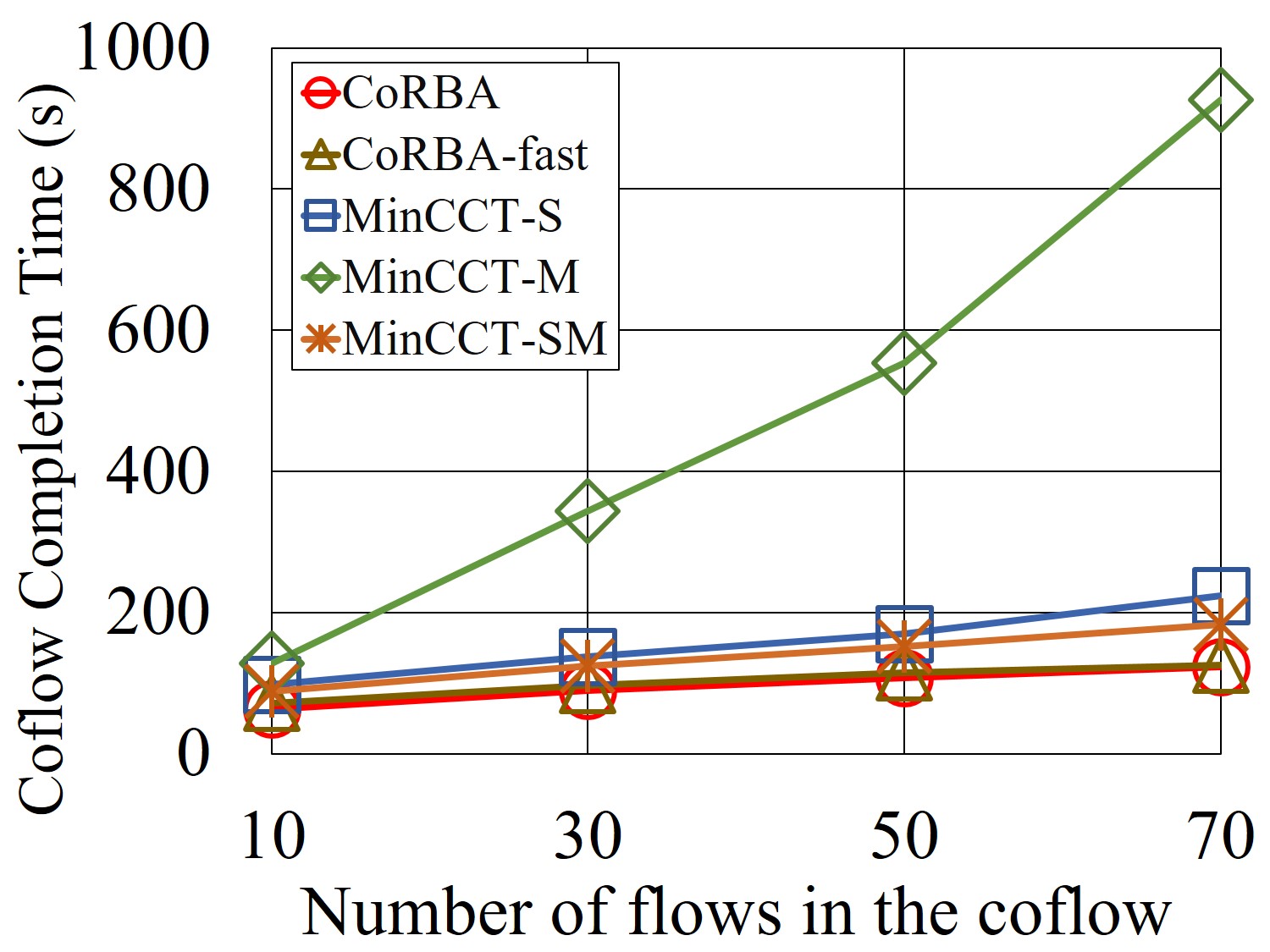}
\label{fig:cos-online-incflow-cct}
}
\subfigure[Ratio CCT to CoRBA]{
\includegraphics[width=1.65in]{./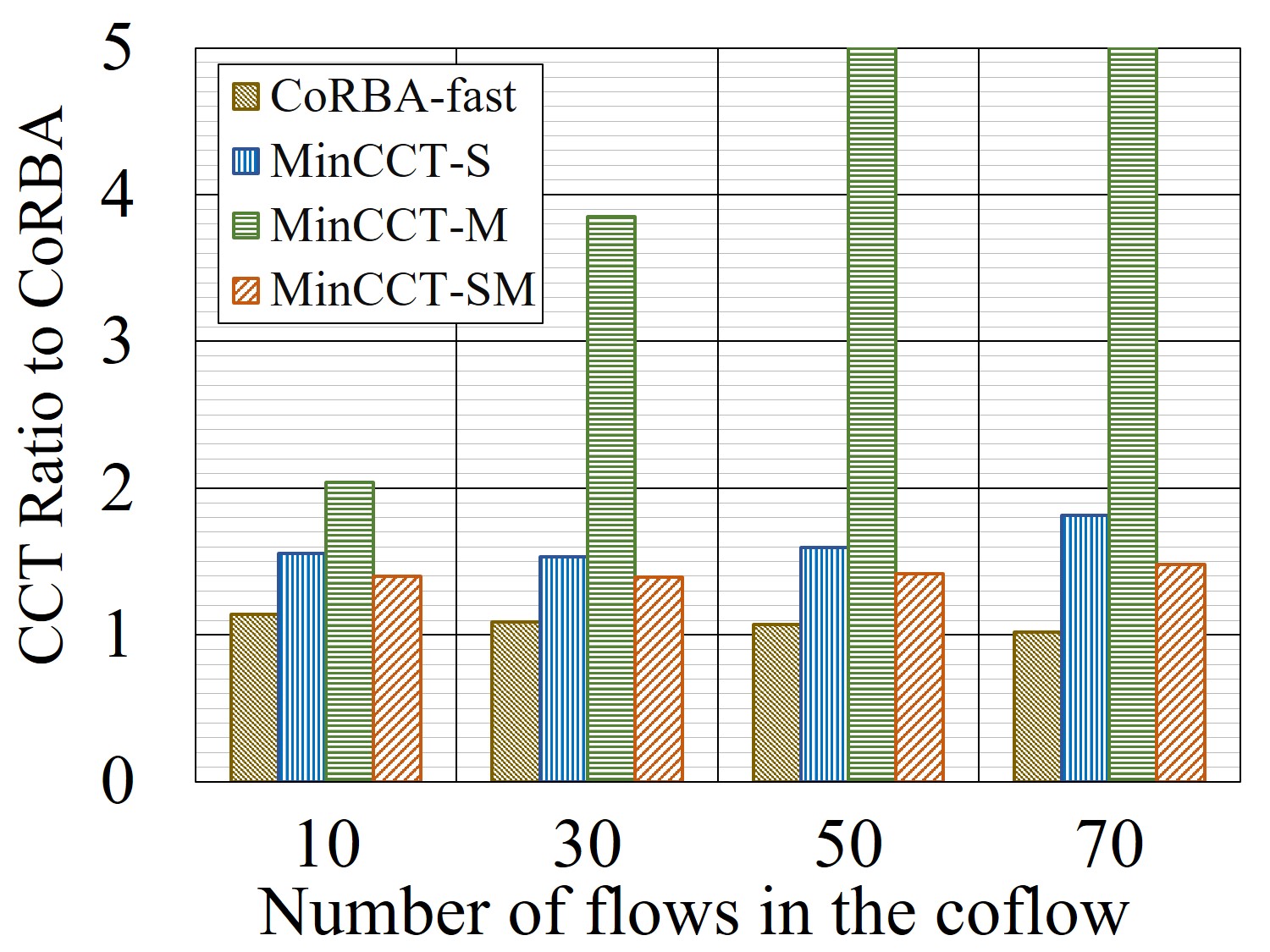}
\label{fig:cos-online-incflow-ratio}
}
\subfigure[Running time]{
\includegraphics[width=1.65in]{./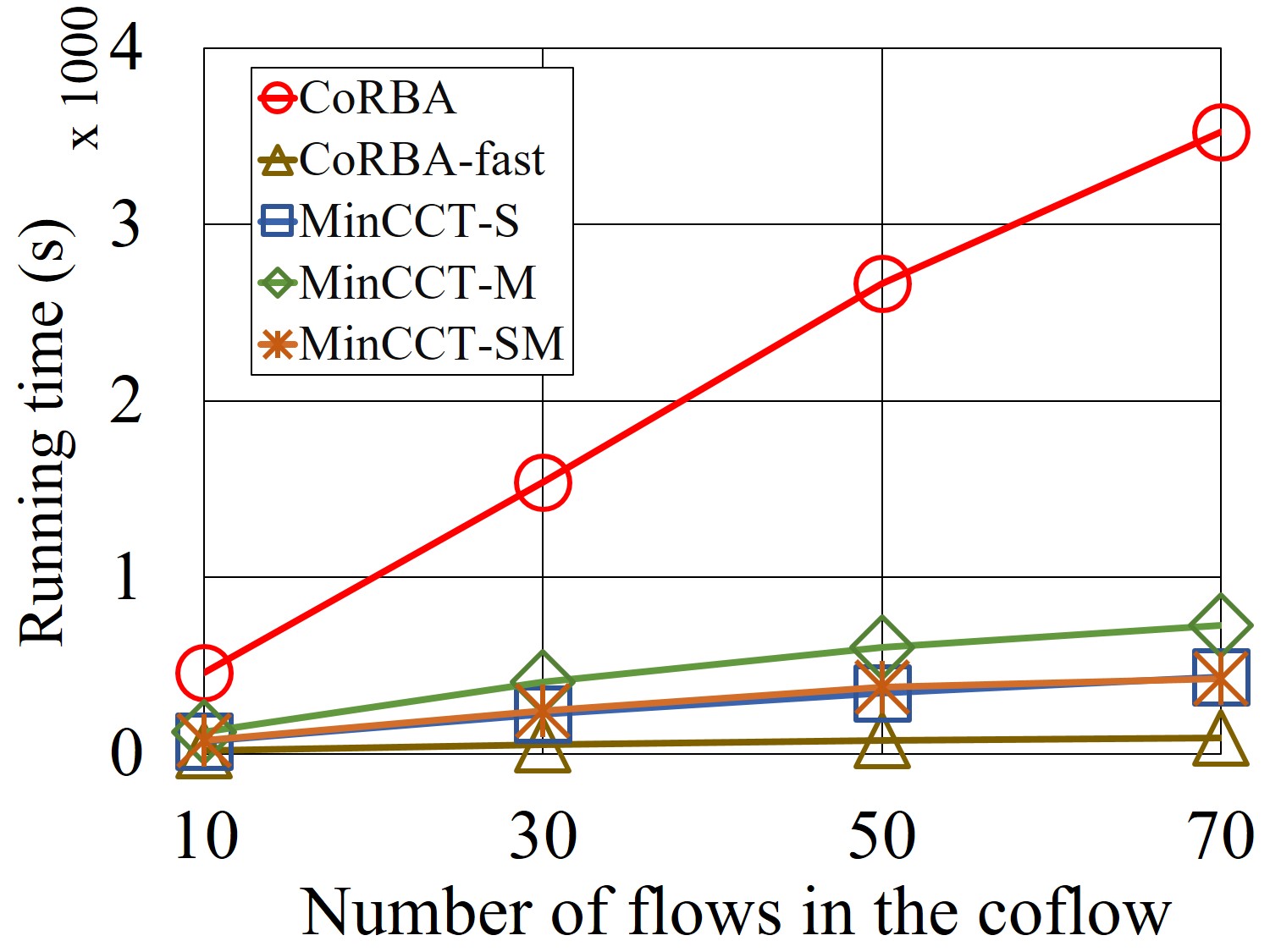}
\label{fig:cos-online-incflow-runtime}
}
\vspace{-2mm}
\caption{Performance of CoRBA and CoRBA-fast in online simulations with increasing number of flows in the coflow.}
\label{fig:cos-online-incflow}
\vspace{-7mm}
\end{center}
\end{figure*}

\smallskip \noindent {\bf Performance with Increasing Size of the Network.}
In this simulation, we demonstrate how our algorithms perform as the number of nodes in the FatTree network increases from 52 to 4500, i.e.,
the number of pods increases from 4 to 20. 
We fix the number of flows in the coflow at 50. 

Fig.~\ref{fig:cos-incnet-cct} shows that the CCT of CoRBA is similar to that of CoRBA-fast, 30\% smaller than that of MinCCT-SM, and 200\% smaller
than that of MinCCT-S and MinCCT-M.
Fig.~\ref{fig:cos-incnet-bw} shows that CoRBA allocate 100\% more bandwidth than MinCCT-SM, 300\% more bandwidth than MinCCT-M and MinCCT-SM, but
similar amount to CoBRA-fast.

We observe that the bandwidth allocated by CoRBA is dramatically increased when the network just starts expanding and becomes more smooth
thereafter.
We attribute this to the limitation on how much bandwidth the algorithm can find to utilize.
When the network is small, the number of good paths (with large available bandwidth) is also small. 
Consequently, CoRBA may schedule some flows to use paths with less bandwidth, which limits the CCT. 
At this stage, an expansion of network generates more good paths and therefore CoRBA can allocate more bandwidth. 
However, along with the expansion, the number of good paths becomes more than enough and CoRBA is able to schedule most of the flows on these
paths.
At this stage, the quality of paths is barely improved.
As a result, the increase of total allocated bandwidth becomes small. 
Such a trend then leads the trend of CCT, which is a steep decrease at the beginning but becomes relatively flat
thereafter, as shown in Fig.~\ref{fig:cos-incnet-cct}.


Fig.~\ref{fig:cos-incnet-pathlen} shows the average length of flow routes.
MinCCT-M and CoRBA have the longest route length, followed by CoRBA-fast, MinCCT-S and MinCCT-SM.
At last, Fig.~\ref{fig:cos-incnet-runtime} shows the algorithm running time. 
When the network contains 4500 nodes, the running
time of CoRBA is about 2500 seconds, while that of CoRBA-fast, MinCCT-S, MinCCT-M, and MinCCT-SM is 10 seconds, 140 seconds, 600 seconds, and 160 seconds respectively.

{\bf Summary.} In offline simulations, we examined the performance of CoRBA and CoRBA-fast when scheduling a single coflow. 
Benefiting from the ability of finding paths with less overlaps and allocating more bandwidth, CoRBA and CoRBA-fast outperforms MinCCT-SM by a factor
of 1.3-1.5 and outperforms MinCCT-S and MinCCT-M by a factor of 3-5. 
Moreover, CoRBA-fast has similar performance with CoRBA but less efficiency on utilizing available bandwidth.

\begin{figure*}[!t]
\begin{center}
\subfigure[Coflow Completion Time]{
\includegraphics[width=1.65in]{./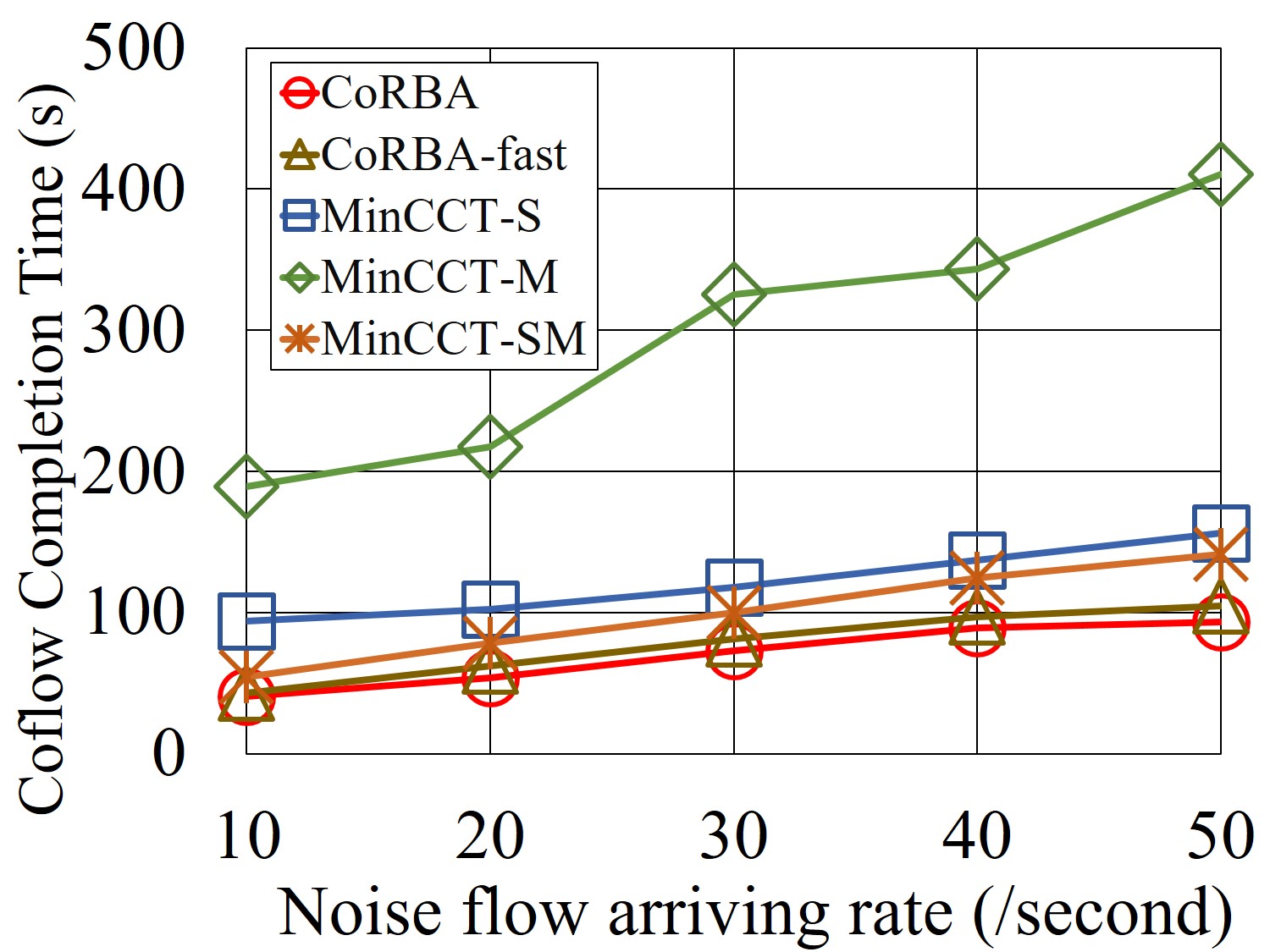}
\label{fig:cos-online-incnoise-cct}
}
\subfigure[Ratio CCT to CoRBA]{
\includegraphics[width=1.65in]{./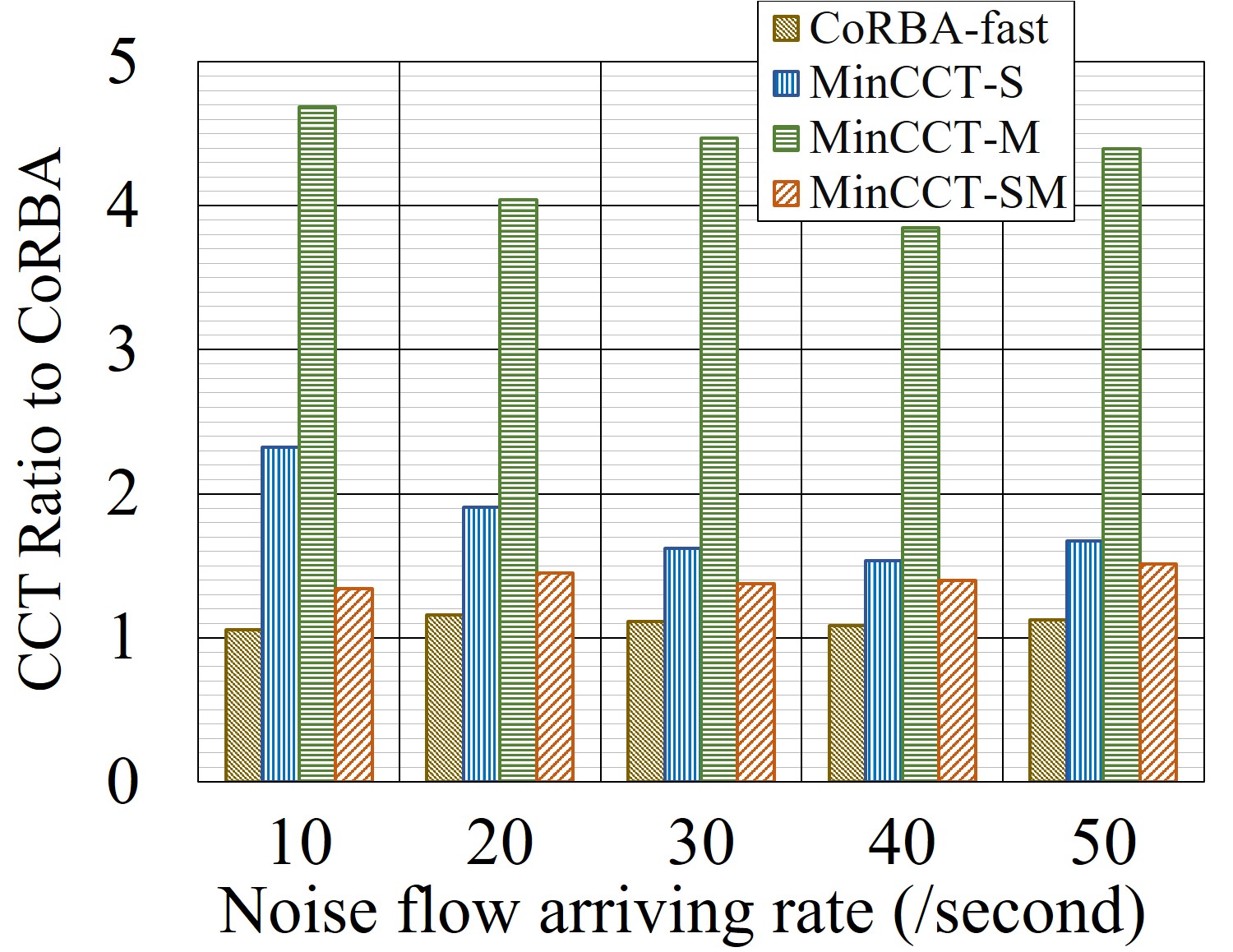}
\label{fig:cos-online-incnoise-ratio}
}
\subfigure[Running time]{
\includegraphics[width=1.65in]{./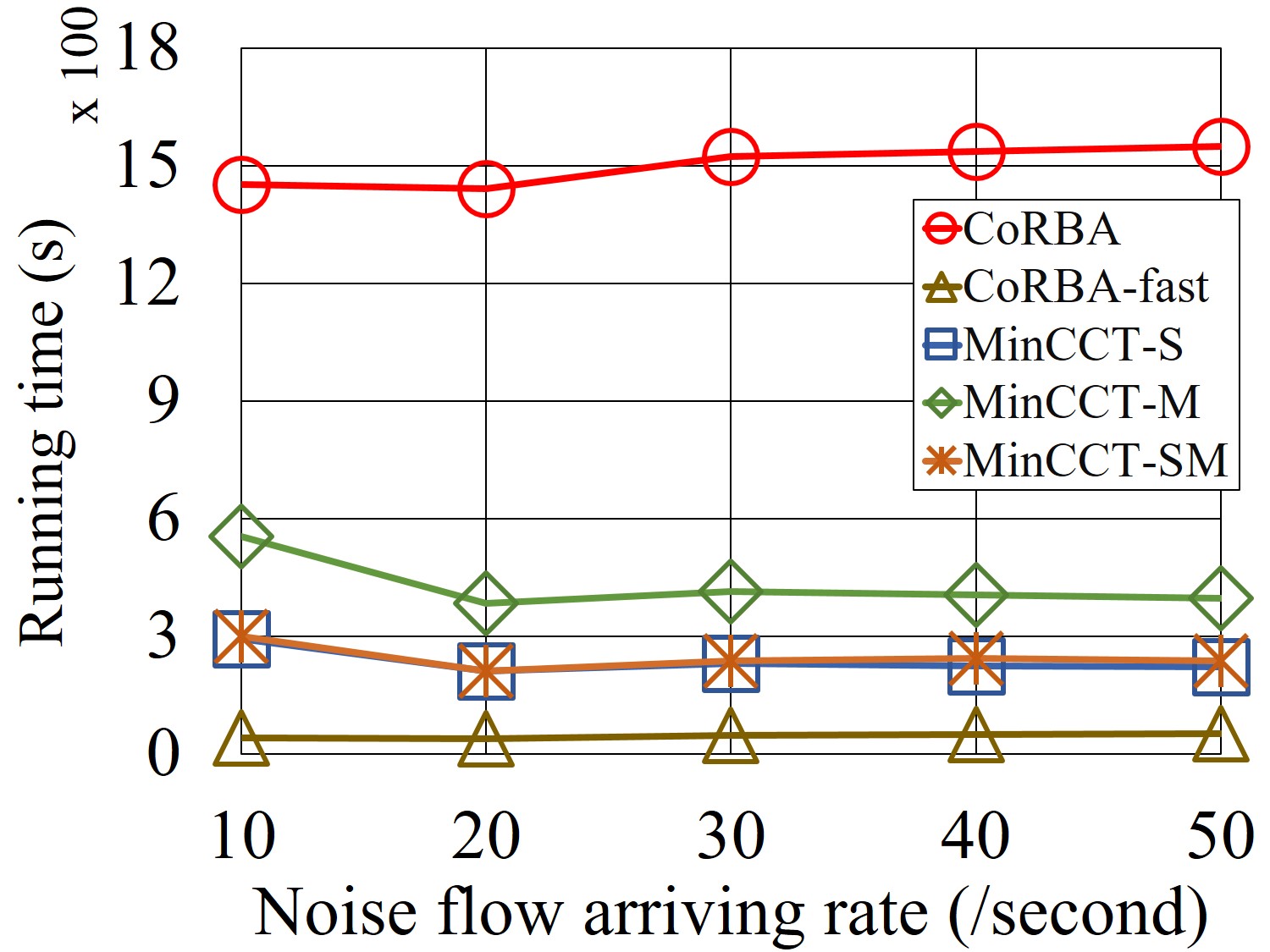}
\label{fig:cos-online-incnoise-runtime}
}
\vspace{-2mm}
\caption{Performance of CoRBA and CoRBA-fast in online simulations with increasing arriving rate of noise flows.}
\label{fig:cos-online-incnoise}
\vspace{-7mm}
\end{center}
\end{figure*}

\subsection{Performance Evaluation through Online Simulations}
In this section, we examine long-term performance of the proposed algorithms through online simulations in which we simulate the execution of a flow
scheduler using the state-of-the-art scheduling policy.
\subsubsection{Simulation Setup}
A simulation begins with a 10-array (625 nodes) modified FatTree network without any flows. 
In the following, coflows start to arrive at a rate following a Poisson distribute with $\mu = 0.01/\text{second}$, and stop arriving at
1800 seconds.
The whole simulation finishes, once all coflows are scheduled. 
Each data point in the results is an average of 10 simulations performed on an Intel 2.5GHz processor.
\smallskip \\
\noindent {\bf Scheduling Policy.}
We use the scheduling policy proposed in RAPIER~\cite{coflow:2015rapier} to schedule arrived coflows.  
Once a coflow arrives, the scheduling algorithm is called to calculate the CCT of all existing coflows based on their current
residual volume.
In the following, all coflows are scheduled in the ascending order of their CCT. 
Once a coflow is scheduled, the CCT of all other coflows are re-calculated based on updated bandwidth availability. 
The selection procedure is repeated until all existing coflows are addressed.
An existing coflow may be preempted by a newly arrived coflow.
If scheduling a coflow is failed, this coflow is added into a waiting queue. 
Meanwhile, if a coflow has been waiting for more than 100 seconds, it gets the privilege to be scheduled regardless of its calculated CCT.  
At last, once a coflow finishes,the released bandwidth is distributed to existing coflows, aiming to facilitate the flow completion. 
\smallskip \\ \noindent {\bf Noise flows.}
The noise flows arrives following a Poisson distribution 
and the duration of these noise flows
follows an uniform distribution in $[1,150]$.
When a noise flow starts, we randomly selects a shortest path as its route and allocates a random amount of bandwidth. 

\subsubsection{Evaluation Metrics}
We focus on two metrics: (i) the average CCT of coflows and (ii) the algorithm running time.


\subsubsection{Evaluation Results of CoRBA}
\smallskip \noindent {\bf Performance with Increasing Size of the Coflow.}
We first study how our algorithms perform when the number of flows in the coflow increases. 
We set the arriving rate of noise flows to follow a Poisson distribution with $\mu = 40/\text{second}$.

Fig.~\ref{fig:cos-online-incflow-cct} shows the average CCT generated by each algorithm and Fig.~\ref{fig:cos-online-incflow-ratio} shows the ratio of
the average CCT of other algorithms to that of CoRBA. 
We observe that the average CCT generated by CoRBA is about 10\%, 40\%, 60\%-80\%, and 100\%-650\% smaller than that generated by CoRBA-fast,
MinCCT-SM, MinCCT-S, and MinCCT-M respectively.
Fig.~\ref{fig:cos-online-incflow-runtime} shows the running time of each algorithm. 
While CoRBA has the largest running time, CoRBA-fast has the smallest running time which is 4-8 times faster than the three MinCCT algorithms and
about 40 times faster than CoRBA .
\smallskip \\ \noindent {\bf Performance with Increasing Size of the Coflow.}
We further study how our algorithms perform when the arriving rate of noise flows increases. 
We set each coflow containing 30 flows. 

Fig.~\ref{fig:cos-online-incnoise-cct} shows the average CCT generated by each algorithm and Fig.~\ref{fig:cos-online-incnoise-ratio} shows the ratio
of the average CCT of other algorithms to that of CoRBA.
We can see that the average CCT generated by CoRBA is about 10\%, 40\%, 60\%--120\%, and 400\%-500\% smaller than that generated by CoRBA-fast,
MinCCT-SM, MinCCT-S, and MinCCT-M respectively.
Fig.~\ref{fig:cos-online-incnoise-runtime} shows the running time of each algorithm. 
Again, CoRBA has the largest running time and CoRBA-fast has the smallest running time which is 4-8 times faster than the three MinCCT algorithms
and around 30 times faster than CoRBA. 

{\bf Summary.} In online simulations, we demonstrate that CoRBA and CoRBA-fast maintain their advantage over the MinCCT algorithms:
They are at least 40\% (and up to 500\%) better than the comparison algorithms.
We attribute this to their better performance when scheduling each individual coflow.
Meanwhile, CoRBA-fast is much faster than CoRBA but CoRBA performs about 10\% better than CoRBA-fast. 

\section{Discussion}
\label{sec:discuss}
In simulations, it has been shown that CoRBA-fast can be tens of times faster than other algorithms, which makes it a good choice in practice.  
In this section, we further discuss about its applicability by comparing its running time with the transferring time of flows in some typical
MapReduce jobs. 

The shuffle stage in MapReduce jobs generates coflows. In shuffle-heavy jobs, like tera-sort and ranked-inverted-index, the shuffle
volume can be more than 200 GB~\cite{mapreduce:2013ishuffle,mapreduce:2014mronline}. 
Considering the simulation case that scheduling a coflow with 100 flows in a network with 1617 nodes, the running time of CoRBA-fast is
about 6 seconds and the allocated bandwidth is around 60 Gb/s.
In this case, if the coflow contains 200 GB data, its transferring time is about 27 seconds, which is nearly 5 times of the running time of
CoRBA-fast. 
Whereas, the MinCCT algorithms allocate 10-20 Gb/s bandwidth and the coflow transferring time is 80-160 seconds.
We can see that CoRBA-fast significantly reduces the coflow transferring time and its running time is generally small compared to the reduced
transferring time. 
When examining other simulation cases, we get similar results.
Therefore, we believe that the use of CoRBA-fast is very applicable in practice. 
\section{Conclusion}
\label{sec:conlusion}
In this paper, we focused on how to schedule---route and allocate bandwidth to---a coflow with the goal of minimizing the its completion time. 
We first studied the problem of optimal bandwidth allocation with pre-determined routes and provided an analytical solution. 
Subsequently, we formulated the coflow scheduling problem as a NLMIP problem, presented a relaxation together with an equivalent
convex optimization problem, and proposed an algorithm called CoRBA and a simplified version called CoRBA-fast to solve the problem.

We evaluated the CoRBA and CoRBA-fast algorithms by comparing them with the state-of-the-art algorithms in both offline and online simulations. 
In offline simulations, we demonstrated that when scheduling a single coflow, CoRBA and CoRBA-fast 30\%-400\% smaller CCT than their comparison
algorithms.
In online simulations, we simulated the execution of a flow scheduler and used the state-of-the-art scheduling policy. 
The results show that CoRBA and CoRBA-fast are at least 40\% and up to 500\% better than the comparison algorithms. 
Meanwhile, the results also show that CoRBA-fast can be tens of times faster than all other algorithms with the cost of about 10\% performance
degradation compared to CoRBA, which makes CoRBA-fast very applicable in practice.

\bibliographystyle{IEEEtran}
\bibliography{IEEEabrv,reference}

\begin{IEEEbiographynophoto}{Li Shi}
received his Ph.D. degree from Department of Electrical and Computer Engineering at Stony
Brook University, Stony Brook, NY, in 2016. Previous,  he received his B.E. degree in electrical and
computer engineering from Shanghai Jiao Tong University, Shanghai, China, in 2010.
He is working at Snap Inc, Venice, CA.
His research interests include task scheduling and resource allocation in data centers, cloud
computing, data center network, software defined network, etc.
\end{IEEEbiographynophoto}
\begin{IEEEbiographynophoto}{Yang Liu}
received his Ph.D. degree from Department of Electrical and Computer Engineering at Stony
Brook University, Stony Brook, NY, in 2017. Previously, he received his B.E. degree from Department
of Electrical and Computer Engineering at University of Electronic Science and Technology of China,
Chengdu, China, in 2011. His research interests are in the area of distributed/parallel computing,
networking, and load balancing algorithms. He is currently working on divisible load theory and
heterogeneous system applications.
\end{IEEEbiographynophoto}
\begin{IEEEbiographynophoto}{Junwei Zhang}
Dr. Junwei Zhang received the PhD degree from the Applied Mathematics and Statistics Department of Stony Brook University in 2018.  His research interests include parallel computing optimization, computational geometry and applied machine learning.
\end{IEEEbiographynophoto}
\begin{IEEEbiographynophoto}{Thomas G. Robertazzi}
received the Ph.D from Princeton University, Princeton, NJ, in 1981 and the B.E.E.
from the Cooper Union, New York, NY in 1977. 
He is presently a Professor in the Dept. of Electrical and Computer Engineering at Stony Brook University, Stony Brook N.Y. 
He has published extensively in the areas of parallel processing scheduling, telecommunications and performance evaluation. 
Prof. Robertazzi has also authored, co-authored or edited six books in the areas of networking,
performance evaluation, scheduling and network planning. 
He is a Fellow of the IEEE and was co-chair of the Stony Brook University Senate Research
Committee from 2008 to 2018.
\end{IEEEbiographynophoto}

\end{document}